\newtheorem{theorem}{Theorem}
\newtheorem{proposition}{Proposition}
\theoremstyle{definition}
\newtheorem{remark}{Remark}
\newcommand{\opt}[1]{#1^*}
\newcommand{\nonopt}[1]{#1^o}
\newcommand{\feascov}{\mathcal{P}}
\newcommand{\ccovaux}[1][1,\cdots,K]{_{#1}}
\newcommand{\ccov}[1][]{\bar{\boldsymbol{Q}}^{#1}}
\newcommand{\ccovo}[1][]{\bar{\boldsymbol{Q}}^{*#1}}
\newcommand{\ccovno}[1][]{\bar{\boldsymbol{Q}}^{o#1}}
\newcommand{\cccov}{\mathcal{Q}}
\newcommand{\cccovo}{\mathcal{Q}^*}
\newcommand{\cccovno}{\mathcal{Q}^o}
\newcommand{\ucov}[1][]{\boldsymbol{Q}^{#1}\ccovaux}
\newcommand{\ucovo}[1][]{\boldsymbol{Q}^{*#1}\ccovaux}
\newcommand{\ucovno}[1][]{\boldsymbol{Q}^{o#1}\ccovaux}
\title{Optimum Transmission Through the Multiple-Antenna Gaussian Multiple Access Channel}
\author{Daniel~Calabuig,~\IEEEmembership{Member,~IEEE}, Ramy~H.~Gohary,~\IEEEmembership{Senior~Member,~IEEE}, and~Halim~Yanikomeroglu,~\IEEEmembership{Senior~Member,~IEEE}%

\thanks{Preliminary versions of this paper were presented at the \emph{IEEE International Symposium on Information Theory} (ISIT), Istanbul, Turkey, 7-12 July 2013, and at the \emph{IEEE International Workshop on Signal Processing Advances in Wireless Communications} (SPAWC), Toronto, Canada, 22-25 June 2014.}

\thanks{The work of the first author was supported by a Marie Curie International Outgoing Fellowship (IOF) of the European Commission under the project COMIC (253990). The work of the second and third authors was supported in part by Huawei Canada Co., Ltd., and in part by the Ontario Ministry of Economic Development and Innovations ORF-RE (Ontario Research Fund---Research Excellence) program.}

\thanks{Daniel Calabuig (dacaso@iteam.upv.es) is with the Institute of Telecommunications and Multimedia Applications at the Universidad Polit\'ecnica de Valencia, Spain. Ramy H. Gohary (gohary@sce.carleton.ca) and Halim Yanikomeroglu (halim@sce.carleton.ca) are with the Department of Systems and Computer Engineering at Carleton University, Canada.}

\thanks{Copyright (c) 2015 IEEE. Personal use of this material is permitted.  However, permission to use this material for any other purposes must be obtained from the IEEE by sending a request to pubs-permissions@ieee.org.}% <-this % stops a space
}
\begin{document}
\maketitle

\begin{abstract}
T{his} paper studies the optimal points in the capacity region of Gaussian multiple access channels~(GMACs) with constant fading, multiple antennas and various power constraints. The points of interest maximize general rate objectives that arise in practical communication scenarios. Achieving these points constitutes the task of jointly optimizing the time-sharing parameters, the input covariance matrices and the order of decoding used by the successive interference cancellation receiver. To approach this problem, Carath\'{e}odory's theorem is invoked to represent time-sharing and decoding orders jointly as a finite-dimensional matrix variable. This variable enables us to use variational inequalities to extend results pertaining to problems with linear rate objectives to more general, potentially nonconvex, problems, and to obtain a necessary and sufficient condition for the optimality of the transmission parameters in a wide range of problems. Using the insights gained from this condition, we develop and analyze the convergence of an algorithm for solving, otherwise daunting, GMAC-based optimization problems.
\end{abstract}

\begin{IEEEkeywords}
Multiuser channels, optimization, convergence.
\end{IEEEkeywords}

\section{Introduction}
\PARstart{I}{n} a  Gaussian multiple access channel~(GMAC) multiple users send independent signals to one destination. Such a channel model  arises in uplink communication scenarios including cellular systems  when multiple users communicate with a base station and satellite systems  when multiple ground stations communicate with a satellite~\cite{Cover}. In addition to the  relevance of the GMAC to practical communication scenarios, the analysis of this channel with  a sum-power   constraint is closely related to the analysis of the, usually less understood, Gaussian broadcast channel~(GBC). This relationship was discovered in~\cite{Goldsmith_Duality}, was further investigated in~\cite{Yu},  and was used in~\cite{BC_Capacity_Region_J} to facilitate the evaluation of the capacity region of the GBC.

The capacity region of general  multiple access channels was obtained in~\cite{Ahlswede} and~\cite{PHD_Liao}. Particularizing these results to the case of Gaussian channels, it was shown therein that corner  points on the boundary of the capacity region are achieved when  the signal of each user is Gaussian distributed with an appropriate covariance and the receiver uses successive interference cancellation~(SIC) to decode  the users' signals sequentially~\cite{Wyner}. In SIC, the receiver decodes the signal of each user while treating the  signals of the set of users interfering with it as noise. After decoding, the signal of each user  is stripped off from the signal interfereing with the signal of the remaining users. Other points on the boundary of the capacity region can be obtained by time-sharing, whereby each decoding order and collection of users' covariance matrices are used during a fraction of the signalling duration.

An alternative, yet potentially suboptimum, decoding method is the so-called successive group detection~\cite{Varanasi,Prasad}, which resembles SIC but therein, groups, rather than single users, are decoded at each decoding stage. In~\cite{Varanasi}, the performance of this technique is shown to depend, not only on the number of groups, but also on how the users are distributed across groups, and on the decoding order of these groups. In~\cite{Prasad}, the optimum ordered partition that minimizes outage probability is determined for a slow fading narrow band GMAC when perfect channel state information (CSI) is available at the receiver only, but not at the transmitters.

Achieving particular points within the GMAC capacity region was considered in~\cite{Wei_Yu_IWFA_MAC},~\cite{Fairness_Maddah_Ali} and~\cite{Mesbah_Alnuweiri}, when perfect CSI is available at the receiver and the transmitters. In particular, in~\cite{Wei_Yu_IWFA_MAC}, an iterative water-filling algorithm is considered, which, for a sum power constraint at each user, yields a rate vector on the sum-capacity facet and the input covariance matrices that achieve it. In~\cite{Fairness_Maddah_Ali}, the input
covariance matrices of all the users are optimized to maximize the sum-capacity, and those matrices are subsequently used with time-sharing to achieve a fairness criterion. In~\cite{Mesbah_Alnuweiri}, two cases are considered: the case of small  number of users, which gives rise to a scenario in which time-sharing is feasible, and the case of large number of users, which gives rise to a scenario in which time-sharing is infeasible. In~\cite{Mesbah_Alnuweiri}, fairness is not directly addressed, but therein, the weighted sum of the rates of a given subset of users is maximized while the rates of the remaining users are restricted to prescribed values. A somewhat similar philosophy was applied to other multi-user communication scenarios. For instance, in~\cite{Liu_Hou_Infocom} the GMAC-GBC duality in~\cite{Goldsmith_Duality}  is used  to develop an algorithm that maximizes  a weighted sum of the logarithms of the users' rates in the GBC. In~\cite{Maddah_Ali_Doost_Khandani_Max_Min_Fairness}, an interference channel is considered, and the decoding order that achieves the optimal max-min fairness is determined.

In this paper, we focus on the  GMAC scenario with constant fading and perfect CSI at the receiver and transmitters. In contrast with previous works, we consider the joint optimization of the set of input covariance matrices, time-sharing parameters and user orderings that maximize various rate  objectives with various classes of power constraints. To approach this problem, we invoke Carath\'{e}odory's theorem to represent time-sharing and decoding orders jointly as a finite-dimensional matrix variable. This variable enables us to use variational inequalities to extend results pertaining to problems with linear objectives to more general, potentially nonconvex, problems, and to obtain a necessary and sufficient condition for the optimality of the transmission parameters in a wide range of problems. It is shown that for a class of problems in which the power constraints are convex, it suffices for each user to use only one covariance matrix in all its allocated time slots. On the other hand, for arbitrary power constraints, if the objective is linear no time-sharing is necessary. These results significantly reduce the design complexity and, together with the necessary and sufficient condition, enable us to find solutions for, otherwise daunting, GMAC-based optimization problems.

The rest of the paper is organized as follows. In Section~\ref{model}, we present the system model and use Carath\'{e}odory's theorem to express rates as functions of the finite-dimensional matrix variable that jointly represents time-sharing and decoding orders. In Section~\ref{OptGMAC:Case02}, we study problems with linear rate objectives. In Section~\ref{NonLinear}, we use variational inequalities to extend the results obtained for linear rate objectives in Section~\ref{OptGMAC:Case02} to problems with nonlinear rate objectives. In Section~\ref{Application}, we present an algorithm to compute the optimum transmission parameters for a class of nonconvex problems. In Section~\ref{examples}, numerical examples are provided to illustrate the efficacy of the proposed algorithm in solving two instances of this class of problems. In Appendix~\ref{preliminaries}, we provide background material pertaining to variational inequalities which are used in Section~\ref{NonLinear}.

\paragraph*{Notation} The expected value of a matrix $\boldsymbol{A}$ is denoted by $\text{E}[\boldsymbol{A}]$, and the direct sum of two square matrices $\boldsymbol{B}$ and $\boldsymbol{C}$ is denoted by $\boldsymbol{B}\oplus\boldsymbol{C}$, that is, $\boldsymbol{B}\oplus\boldsymbol{C}$ is a block diagonal matrix with the matrices $\boldsymbol{B}$ and $\boldsymbol{C}$ along the diagonal. The rest of the notation is standard.

\section{System model and optimization}
\label{model}
The GMAC is composed of $K$ users transmitting to one base station. The number of transmit antennas of the $k$-th user is denoted by $N_k$, $k=1,\ldots,K$, and the number of receive antennas at the base station is denoted by $N_\text{R}$. The received signal, $\boldsymbol{y}\in\mathbb{C}^{N_\text{R}}$, is given by
\begin{equation}
\boldsymbol{y}=\sum_{k=1}^K\boldsymbol{H}_k\boldsymbol{x}_k+\boldsymbol{z},
\end{equation}
where $\boldsymbol{H}_k\in\mathbb{C}^{N_\text{R}\times N_k}$ is the channel matrix of the $k$-th user and  $\boldsymbol{x}_k\in\mathbb{C}^{N_k}$ is its transmitted  signal. The Gaussian noise at the base station is denoted by $\boldsymbol{z}\in\mathbb{C}^{N_\text{R}}$, which, without loss of generality, is assumed to satisfy $\text{E}[\boldsymbol{z}\boldsymbol{z}^\dagger]=\boldsymbol{I}$. We consider the case in which the channel matrices are known by the users and the base station.

Let $\ucov[][k]=\text{E}[\boldsymbol{x}_k\boldsymbol{x}_k^\dagger]$ be the covariance matrix of the signal of user $k$, and let $\ccov[]=\ucov[][1]\oplus\cdots\oplus\ucov[][K]\in\mathcal{D}$ be the composite covariance matrix, where $\mathcal{D}$ is the set of all block diagonal matrices in which the $k$-th block size is $N_k\times N_k$, $k=1,\dots,K$. We consider systems with $L$ power constraints which can be expressed as $g_\ell(\ccov)\leq0,\;\ell=1,\ldots,L$. Let
\begin{equation}
\feascov=\{\ccov\in\mathcal{D}\;|\;\ccov\succeq0,\;g_\ell(\ccov)\leq0,\;\ell=1,\ldots,L\}\label{eq:P_set}
\end{equation}
be the set of all feasible $\ccov$. For finite transmit powers, this set is bounded.

Studying the polymatroid structure of the GMAC capacity region, it can be shown that the corners points of the polymatroid, and hence of the capacity region, are achieved with an SIC receiver~\cite{Tse_Hanly}. To achieve a particular corner, the receiver orders the users and decodes their signals sequentially. To decode the signal of a particular user, the receiver treats the  signals of the set of users interfering with it as additive noise. After decoding, the signal of that user is stripped off from the signals of the remaining users. Since the amount of interference observed in decoding the signal of a particular user depends on the ordering, it can be seen that, to maximize a given objective, the receiver  must determine the optimal user ordering.

Let $\pi_1,\cdots,\pi_{K!}$ be the set of all $K!$ permutations of $K$ elements, where  $\pi_i(j)$ refers to  the user in the $j$-th position of the $i$-th ordering. When the receiver uses $\pi_i$ for decoding the users' signals, each user $k\in\{1,\dots,K\}$ is able to achieve the following rate~\cite{Goldsmith_Duality}:
\begin{equation}\label{eq:r_ki}
r_{ki}\left(\ccov[]\right)=\log\frac{\left|\boldsymbol{I}\!+\!\sum_{j\geq\pi_i^{-1}(k)}\boldsymbol{H}_{\pi_i(j)}\ucov[][\pi_i(j)]\boldsymbol{H}_{\pi_i(j)}^\dagger\right|}{\left|\boldsymbol{I}\!+\!\sum_{j>\pi_i^{-1}(k)}\boldsymbol{H}_{\pi_i(j)}\ucov[][\pi_i(j)]\boldsymbol{H}_{\pi_i(j)}^\dagger\right|}.
\end{equation}
Note that this rate expression corresponds to the case in which the channel coefficients are fixed and known by the users and the base station.

The GMAC capacity region is the convex hull of all the rate vectors that are achievable with all orderings and covariance matrices in $\feascov$, cf.~\eqref{eq:P_set}. This implies that, to achieve a particular rate vector in the capacity region, one may have to use convex combinations of multiple rate vectors with the corresponding composite covariance matrices and permutations. Let the number of such rate vectors be denoted by $M$, and let the corresponding composite covariance matrices and permutations be denoted by $\{\ccov[(m)]\}_{m=1}^M$ and $\{\nu_m\}_{m=1}^M$ respectively. Then, convex combinations can be implemented by using $M$ non-negative time-sharing coefficients, $a_m\geq 0$, $m=1,\ldots,M$, where $\sum_{m=1}^Ma_m=1$. These coefficients represent the percentage of time during which each of the $M$ rate vectors, with the corresponding composite covariance matrix and permutation, are used. Using this notation, with time-sharing included, the $k$-th user achieves the following rate:
\begin{equation}\label{eq:model:01}
\sum_{m=1}^Ma_m\log\frac{\left|\boldsymbol{I}\!+\!\sum_{j\geq\nu_m^{-1}(k)}\boldsymbol{H}_{\nu_m(j)}\ucov[(m)][\nu_m(j)]\boldsymbol{H}_{\nu_m(j)}^\dagger\right|}{\left|\boldsymbol{I}\!+\!\sum_{j>\nu_m^{-1}(k)}\boldsymbol{H}_{\nu_m(j)}\ucov[(m)][\nu_m(j)]\boldsymbol{H}_{\nu_m(j)}^\dagger\right|},
\end{equation}
where $\ucov[(m)][k]$ is the covariance matrix of the signal of the $k$-th user in the \mbox{$m$-th} composite covariance matrix, $\ccov[(m)]$. Since the convex hull that constitutes the GMAC capacity region is constructed using time-sharing of potentially non-connected sets, the maximum number of rate vectors required to be combined can be obtained using Carath\'eodory's theorem~\cite[Proposition~B.6]{Bertsekas}, which implies that it suffices to set $M=K+1$ for every point in the GMAC capacity region to be achieved. However, if the convex hull is constructed using one connected set, it can be shown, using the Fenchel-Eggleston strengthening of Carath\'eodory's theorem, that it suffices to set $M=K$~\cite{Utschick_Brehmer}\footnote{This reference was drawn to our attention by an anonymous reviewer.}. To maintain generality, we consider that $M=K+1$, and henceforth the index $m$ will be assumed to take on values between $1$ and $K+1$.

Being explicit in permutations, the characterization in~\eqref{eq:model:01} is not well-suited for finding  collections of composite covariance matrices, time-sharing parameters and user-orderings that maximize objective functions of practical importance. To circumvent this difficulty, time-sharing parameters and user  permutations are combined in what we refer to as the time-sharing  matrix $\boldsymbol{\alpha}\in\mathbb{R}^{(K+1)\times K!}$, with entries in the $(K+1)\times K!$ unit simplex,
\begin{equation}\label{eq:simplex_a}
\mathcal{S}_\alpha\triangleq\Bigl\{\boldsymbol{\alpha}\Bigl| \sum_{m=1}^{K+1}\sum_{i=1}^{K!}\alpha_{mi}=1,\  \alpha_{mi}\geq 0,\ \forall m,i\Bigr\}.
\end{equation}
The $mi$-th element of $\boldsymbol{\alpha}$, $\alpha_{mi}$, represents the percentage of time during which the $m$-th composite covariance matrix is used with the $i$-th ordering. In particular, $\alpha_{mi}=a_m$ if the permutation $\nu_m=\pi_i$, and $\alpha_{mi}=0$, otherwise. If $\alpha_{mi}=0$ for all $m=1,\dots,K+1$, then, the permutation $\pi_i$ is not used to achieve the rate vector. Hence, this representation renders  the elements of $\boldsymbol{\alpha}$, not only time-sharing coefficients for generating convex combinations, but also  indicators of the used orderings.
Let $\cccov=\{\ccov[(m)]\}_{m=1}^{K+1}$ be the collection of composite covariance matrices. Then, the GMAC capacity region can be expressed as  the union of  rate vectors $\boldsymbol{\rho}\left(\boldsymbol{\alpha},\cccov\right)$ with the $k$-th entry given by
\begin{equation}\label{eq:rate_k}
\rho_k\left(\boldsymbol{\alpha},\cccov\right)=\sum_{m=1}^{K+1}\sum_{i=1}^{K!}\alpha_{mi}r_{ki}\big(\ccov[(m)]\big).
\end{equation}
By introducing the matrix $\boldsymbol{\alpha}$, we obtained a form that is more convenient for subsequent optimization, but at the expense of increasing the dimensionality from combining $K+1$ rate vectors to combining $(K+1)!$ ones. However, since it suffices to time-share no more than $K+1$ rate vectors, any rate vector within the capacity region can be achieved with a time-sharing matrix with, at most, $K+1$ non zero elements. In this case, the $K+1$ rate vectors are the vectors $\{r_{ki}(\ccov[(m)])\}_{k=1}^K$ corresponding to a time-sharing coefficient $\alpha_{mi}>0$. Each of these rate vectors is achieved with the covariance matrices along the diagonal of $\ccov[(m)]$ and the decoding order in the permutation $\pi_i$.

With notation established, our goal is to develop insight into problems of the form:
\begin{subequations}\label{eq:MACP}
\begin{align}
    \underset{\boldsymbol{\alpha},\cccov}{\min}\quad & f\big(\boldsymbol{\rho}(\boldsymbol{\alpha},\cccov)\big),\label{eq:objective_1}\\
    \text{subject to}                    \quad &\boldsymbol{\alpha}\in\mathcal{S}_\alpha,\label{eq:const_1}\\
    &\ccov[(m)]\in\feascov,\; m=1,\ldots,K+1,\label{eq:const_2}
\end{align}
\end{subequations}
where throughout $f$ is assumed to be continuously differentiable, $\mathcal{S}_\alpha$ is defined in~\eqref{eq:simplex_a}, $\feascov$ is defined in~\eqref{eq:P_set}, and the time-sharing matrix $\boldsymbol{\alpha}$ is constrained to the unit simplex $\mathcal{S}_\alpha$. Carath\'eodory's theorem implies that the time-sharing matrix $\boldsymbol{\alpha}$ could be additionally constrained to have no more than $K+1$ non-zero elements. Such a constraint is nonconvex and, by virtue of Carath\'eodory's theorem, excluding it, although it constitutes an ostensible relaxation, does not enlarge the set of achievable rates. The number of real variables in this problem is $(K+1)!+(K+1)\sum_{k=1}^KN_k^2$, where the first term accounts for the number of variables in $\boldsymbol{\alpha}$, and the second term accounts for the number of variables in $\cccov$ ($N_k^2$ represents the number of degrees of freedom of the $k$-th complex Hermitian matrix). We will later present results that will enable us to significantly reduce this number.

\section{Problems with linear rate objectives}
\label{OptGMAC:Case02}
In this section we will present results for problems with linear rate objectives. In particular, we will provide a complete description of the optimum decoding orders, and conclude that the optimum collection of covariance matrices, $\cccov$, contains one composite covariance matrix, $\ccov$. These results will be used in the following section to draw insight into problems with nonlinear objectives.

When the objective is linear, the optimization in~\eqref{eq:MACP} can be cast as
\begin{equation}\label{eq:MACPLinear}
    \underset{\boldsymbol{\alpha}\in\mathcal{S}_\alpha,\;\ccov[(m)]\in\feascov,\;m=1,\dots,K+1}{\max}\quad\sum_{k=1}^Kw_k\rho_k\left(\boldsymbol{\alpha},\cccov\right),
\end{equation}
where $\{w_k\}_{k=1}^K$ are constant, though not necessarily positive, weights. Considering non-positive weights will be useful when we consider nonlinear rate objectives in Section~\ref{NonLinear}.

With $\{w_k\}_{k=1}^K$ given, it is optimum to decode the users following the increasing order of the weights~\cite[Section~III.B]{Tse_Hanly}. Other decoding orders are optimum if they satisfy the conditions of the following proposition.

\begin{proposition}\label{theo:GMACLinearNS}\label{theo:GMACLinear}
For problems with linear rate objectives and arbitrary power constraints, a particular decoding order is optimum if and only if either
\begin{enumerate}
\item it follows the increasing order of the weights, that is, if $w_1\leq\cdots\leq w_K$, user $k$ is decoded before user $k+1$, $k=1,\dots,K-1$; or
\item the rate vector achieved with this ordering is also achieved with an ordering that follows the increasing order of the weights with the same covariance matrices.
\end{enumerate}
\end{proposition}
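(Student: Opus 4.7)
My plan rests on the classical fact that, for any fixed composite covariance matrix $\ccov$, the achievable rate region of the GMAC with SIC is the vertex set of a contra-polymatroid, and that the linear functional $\boldsymbol{r} \mapsto \sum_k w_k r_k$ is maximized over this polymatroid by the ``greedy'' vertex, i.e., by the rate vector produced by decoding in increasing order of the weights (this is the result invoked from Tse--Hanly). With this in hand, the proposition splits into a sufficiency direction, which is almost immediate, and a necessity direction, which is the real content.

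For sufficiency, if the ordering follows the increasing weight order, then for any choice of $\ccov[(m)]$ the rate vector $(r_{ki}(\ccov[(m)]))_{k=1}^K$ attains the maximum of $\sum_k w_k r_k$ over the polymatroid associated with $\ccov[(m)]$, so this ordering can be used at optimality. If instead condition (2) holds, then the contribution of the ordering $\pi_i$ to the objective, namely $\sum_k w_k r_{ki}(\ccov[(m)])$, equals the contribution of some increasing ordering applied to the same $\ccov[(m)]$, and hence $\pi_i$ may be substituted for that increasing ordering in any optimal solution without changing the value of the objective.

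For necessity, suppose $\pi_i$ appears with strictly positive time-share in an optimal $(\boldsymbol{\alpha}^*,\cccovo)$, i.e., $\alpha^*_{mi}>0$ for some $m$. Its contribution is $\alpha^*_{mi}\sum_k w_k r_{ki}(\ccovo[(m)])$. If this quantity were strictly less than $\alpha^*_{mi}\max_{\pi}\sum_k w_k r_k$ evaluated on the polymatroid of $\ccovo[(m)]$, we could replace $\pi_i$ at position $m$ by any increasing ordering and strictly increase the objective, contradicting optimality. Therefore the rate vector $(r_{ki}(\ccovo[(m)]))_{k=1}^K$ must lie on the optimal face of the polymatroid. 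Since this rate vector is itself a polymatroid vertex, and the vertices of the optimal face are precisely the rate vectors generated by orderings compatible with the weight order, this vertex either is produced by $\pi_i$ being itself an increasing order (case 1), or it coincides with the vertex produced by some other, genuinely increasing order applied to the same composite covariance matrix (case 2).

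The only delicate step is the polymatroid argument in the necessity direction: one must justify that a vertex that maximizes the linear functional over the polymatroid must coincide with a greedy vertex, and that this coincidence is with respect to the \emph{same} $\ccov[(m)]$ (so that case 2 is stated with the same covariance matrix, not merely the same achievable weighted sum). Both facts follow from the standard characterization of optimal vertices of a polymatroid via the greedy algorithm; the only care needed is to handle ties in the weights cleanly, which the statement already accommodates by allowing any permutation consistent with $w_1\leq\cdots\leq w_K$ to qualify as an increasing order.
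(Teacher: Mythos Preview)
Your proposal is correct but takes a genuinely different route from the paper. The paper argues by \emph{adjacent transpositions}: for two orderings that differ only in two consecutive positions $k$ and $k+1$, it uses the determinant inequality
\[
\frac{|\boldsymbol{I}+\boldsymbol{A}+\boldsymbol{B}+\boldsymbol{C}|}{|\boldsymbol{I}+\boldsymbol{A}+\boldsymbol{B}|}\leq\frac{|\boldsymbol{I}+\boldsymbol{A}+\boldsymbol{C}|}{|\boldsymbol{I}+\boldsymbol{A}|}
\]
to show directly that swapping into increasing weight order never decreases the weighted sum, and that when $w_k<w_{k+1}$ equality forces the two rate vectors to coincide; the general case follows by chaining such swaps. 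Your argument instead invokes the contra-polymatroid structure of the SIC region for fixed $\ccov$ and the Edmonds-type characterization of the optimal face: any vertex achieving the maximum of $\sum_k w_k r_k$ must be a greedy vertex, hence equal to the rate vector produced by some increasing ordering. Your route is shorter and more structural, but it imports the polymatroid machinery (in particular, that the vertices of the optimal face are exactly the greedy vertices) as a black box; the paper's route is more elementary and self-contained, needing only the single matrix inequality above. Both handle ties in the weights in the same way, and both arrive at the same dichotomy between cases (1) and (2).
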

\begin{proof}
The proof is based on sequential comparisons of the rates achieved by pairs of orderings that are identical for all but two contiguous positions. See details in Appendix~\ref{App:Proof:GMACLinearNS}.
\end{proof}

Proposition~\ref{theo:GMACLinearNS} provides a necessary and sufficient condition that describes all the optimum decoding orderings. The first item of this proposition was also identified in~\cite{Tse_Hanly}.

\begin{remark}\label{theo:GMACLinearRemark}
Proposition~\ref{theo:GMACLinearNS} implies that
\begin{enumerate}
\item if maximizing a linear rate objective requires time-sharing of multiple rate vectors, each of these vectors can be achieved with the same decoding order; and
\item all optimum rate vectors can be achieved with the decoding orders that follow the increasing order of the weights. In other words, the second item of Proposition~\ref{theo:GMACLinearNS} implies that, if the rate vector $\{r_{ki}(\ccov)\}_{k=1}^K$ is optimum and the ordering $i$ does not follow the increasing order of the weights, then $r_{ki}(\ccov)=r_{k\opt{i}}(\ccov)$, $k=1,\dots,K$, for some ordering $\opt{i}$ that follows the increasing order of the weights.\hfill$\Box$
\end{enumerate}
\end{remark}

The first point of Remark~\ref{theo:GMACLinearRemark} implies that the number of variables can be reduced from $(K+1)!+(K+1)\sum_{k=1}^KN_k^2$ to $(K+1)(1+\sum_{k=1}^KN_k^2)$ without loss of optimality. The second item of Remark~\ref{theo:GMACLinearRemark} will be used in Section~\ref{OptGMAC:Case03} to provide a necessary condition that must be satisfied by the solutions of problems with non-linear rate objectives. It is also worth noting that if multiple weights are equal, the decoding of the respective users can be interchanged without loss of optimality.

Finding the optimum solution of~\eqref{eq:MACPLinear} is significantly facilitated by the following result.

\begin{proposition}\label{theo:OneCovMatLinObjA}
For problems with linear rate objectives and arbitrary power constraints, the objective is maximized with one composite covariance matrix, $\ccov$.
\end{proposition}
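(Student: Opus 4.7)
The plan is to exploit the fact that the objective in~\eqref{eq:MACPLinear} is linear in the time-sharing matrix $\boldsymbol{\alpha}$, so it reduces to a convex combination of terms indexed by $(m,i)$, and a convex combination is bounded above by its largest component. This immediately implies that concentrating all time-sharing weight on a single composite covariance matrix cannot decrease the objective.

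Concretely, I would first rewrite the objective using~\eqref{eq:rate_k} by interchanging the order of summation:
\begin{equation*}
\sum_{k=1}^K w_k\rho_k(\boldsymbol{\alpha},\cccov)=\sum_{m=1}^{K+1}\sum_{i=1}^{K!}\alpha_{mi}\,h_i\big(\ccov[(m)]\big),\quad h_i(\ccov)\triangleq\sum_{k=1}^K w_k\,r_{ki}(\ccov).
\end{equation*}
Since $\boldsymbol{\alpha}\in\mathcal{S}_\alpha$, the coefficients $\alpha_{mi}$ are non-negative and sum to one, so the right-hand side is a convex combination of the values $\{h_i(\ccov[(m)])\}_{m,i}$ and is therefore bounded above by $\max_{m,i}h_i(\ccov[(m)])$.

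Next I would argue that this maximum is attained at some pair $(\opt m,\opt i)$, either by appealing to compactness of $\feascov$ (which is bounded, and closed for the class of power constraints under consideration) and continuity of each $h_i$, together with finiteness of the index set of orderings; or, when $\feascov$ is not assumed closed, by an identical argument using the supremum and a limiting/$\varepsilon$-optimal sequence. Let $\opt{\ccov}$ denote the composite covariance matrix achieving this maximum with ordering $\pi_{\opt i}$.

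Finally, I would exhibit a feasible solution that attains the upper bound using only one composite covariance matrix: set $\ccov[(1)]=\opt{\ccov}$ (and the remaining $\ccov[(m)]$ to any feasible matrix), $\alpha_{1\opt i}=1$, and all other entries of $\boldsymbol{\alpha}$ equal to zero. Then the objective equals $h_{\opt i}(\opt{\ccov})=\max_{m,i}h_i(\ccov[(m)])$, matching the upper bound derived above. Hence the optimum is achievable with a single $\ccov$, which is the claim. I do not anticipate a substantial obstacle; the cleanest framing is simply the convex-combination bound, and the only mildly delicate point is the existence of the maximizer, which is handled by compactness (or by taking a supremum and verifying that the bound can be approached). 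Note that Proposition~\ref{theo:GMACLinearNS} is not required for this argument, although it explains why the optimal ordering $\pi_{\opt i}$ can be taken to follow the increasing order of the weights.
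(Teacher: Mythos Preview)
Your argument is correct and rests on the same elementary observation as the paper's proof: the objective is a convex combination of the terms $h_i\big(\ccov[(m)]\big)$, and a convex combination never exceeds its largest component, so concentrating all weight on one pair $(m,i)$ is optimal. The paper phrases this as a contradiction---assuming several composite covariance matrices are used at optimum and noting that all must yield the same weighted sum, so any one of them suffices---whereas you give the direct upper-bound/achievability version. The substantive content is identical.

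One minor contrast worth noting: the paper first invokes Remark~\ref{theo:GMACLinearRemark} (derived from Proposition~\ref{theo:GMACLinearNS}) to fix a single decoding order before running the argument, while you correctly observe that this step is unnecessary for the present claim---your convex-combination bound already ranges over all orderings $i$. Your route is therefore marginally cleaner and self-contained. The only point that requires a word of care, which you flag, is the existence of a maximizer in $\feascov$; the paper sidesteps this by starting from an assumed optimum pair $(\opt{\boldsymbol{\alpha}},\cccovo)$, and your compactness/supremum remark handles it equivalently.
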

\begin{proof}
The first item of Remark~\ref{theo:GMACLinearRemark} is used to fix the decoding order in writing the expression of the optimum rate vector. The proof then uses a contradiction argument. First, it assumes that an optimum pair $(\opt{\boldsymbol{\alpha}},\cccovo)$ is achieved by time-sharing two or more composite covariance matrices. Then, it shows that one of these matrices suffices to maximize the linear rate objective. See Appendix~\ref{App:Proof:OneCovMatLinObjA}.
\end{proof}

Combining Proposition~\ref{theo:GMACLinear} and Proposition~\ref{theo:OneCovMatLinObjA}, it can be seen that, for problems with linear objectives and arbitrary power constraints, time-sharing is not necessary to solve~\eqref{eq:MACPLinear}; one decoding order and one composite covariance matrix suffice to solve~\eqref{eq:MACPLinear}. As a consequence, the number of variables in this class of problems is reduced to $\sum_{k=1}^KN_k^2$. In this case, the problem in \eqref{eq:MACPLinear} can be simplified to
\begin{equation}\label{eq:MACPLinearOneOrd}
    \underset{\ccov\in\feascov}{\max}\quad\sum_{k=1}^Kw_kr_{ki_w}\left(\ccov\right),
\end{equation}
where the $i_w$-th ordering follows the increasing order of the weights. Let $w_1\leq\cdots\leq w_K$, possibly after relabeling the users, then the objective in~\eqref{eq:MACPLinearOneOrd} can be expressed as
\begin{equation}\label{eq:prop:02}
\begin{split}
\sum_{k=1}^Kw_kr_{ki_w}\left(\ccov[]\right)=&
\sum_{k=1}^Kw_k\log\frac{\left|\boldsymbol{I}+\sum_{j\geq k}\boldsymbol{H}_j\ucov[][j]\boldsymbol{H}_j^\dagger\right|}{\left|\boldsymbol{I}+\sum_{j>k}\boldsymbol{H}_j\ucov[][j]\boldsymbol{H}_j^\dagger\right|}\\
=&\sum_{k=1}^K\!\left(w_k-w_{k-1}\right)\log\Big|\boldsymbol{I}+\sum_{j\geq
k}\boldsymbol{H}_j\ucov[][j]\boldsymbol{H}_j^\dagger\Big|.
\end{split}
\end{equation}
If the weights are nonnegative, i.e., $0\leq w_1\leq\cdots\leq w_K$, and $\feascov$ is convex, the problem in~\eqref{eq:MACPLinearOneOrd} is convex and can be solved efficiently. When some of the weights are negative or $\feascov$ is nonconvex, the problem in~\eqref{eq:MACPLinearOneOrd} is nonconvex and no solution is readly available for it. In either case,~\eqref{eq:MACPLinearOneOrd} and~\eqref{eq:prop:02} can be used to model a certain class of problems with nonlinear rate objectives as we show in Section~\ref{NonLinear}.

\section{Problems with nonlinear rate objectives}
\label{NonLinear}
We will now use the results of Section~\ref{OptGMAC:Case02} and Appendix~\ref{preliminaries} to gain insight into problems with nonlinear differentiable rate objectives, like the one in~\eqref{eq:MACP}. In particular, Propositions~\ref{theo:TransLinNec} and~\ref{theo:TransLinSuf} in Appendix~\ref{preliminaries} use variational inequalities to identify classes of problems with nonlinear objectives that share solutions with particular problems with linear objectives. These propositions will allow us to use the results of Section~\ref{OptGMAC:Case02} to describe the optimum covariance matrices, decoding orders and time-sharing parameters that reach the optimum rate vector for a broad class of problems with potentially nonlinear rate objectives.

\subsection{Characterization of the optimum transmission parameters}
\label{OptGMAC:Case03}
We begin by presenting a condition that any optimum collection of composite covariance matrices and time-sharing matrix, including those with more than $K+1$ non-zero elements, must satisfy.

\begin{theorem}\label{theo:LinNecc}\label{theo:SufNecCondGen}
Let the objective of the problem in~\eqref{eq:MACP} be continuously differentiable, and let $\opt{\boldsymbol{\alpha}}$ and $\cccovo$ be optimum. Let $w_0=0$ and let $\{w_k\}_{k=1}^K$ be given by
\begin{equation}\label{eq:theo:LinNecc:00}
    w_k=-\frac{\partial f(\boldsymbol{x})}{\partial x_k}\bigg|_{\boldsymbol{x}=\boldsymbol{\rho}\left(\opt{\boldsymbol{\alpha}},\cccovo\right)},\  k=1,\dots,K.
\end{equation}
Let the users be labelled so that $w_1\leq\cdots\leq w_K$. Then, for each strictly positive element of $\opt{\boldsymbol{\alpha}}$, say $\opt{\alpha}_{mi}$,
\begin{enumerate}
\item decoding the users following the order of the permutation $\pi_i(\cdot)$ is optimum for the linear objective defined by the weights $\{w_k\}_{k=1}^K$ in~\eqref{eq:theo:LinNecc:00}, cf. Proposition~\ref{theo:GMACLinearNS}, and
\item the composite covariance matrix $\ccovo[(m)]$ solves
\end{enumerate}
\begin{equation}\label{eq:theo:LinNecc:01}
    \underset{\ccov[]\in\feascov}{\max}\ \displaystyle\sum_{k=1}^K(w_k-w_{k-1})\log\Bigl|\boldsymbol{I}+\sum_{j\geq k}\boldsymbol{H}_j\ucov[][j]\boldsymbol{H}_j^\dagger\Bigr|.
\end{equation}

If, in addition to continuous differentiability, the objective in~\eqref{eq:MACP} is convex in the users' rates, then every pair $(\opt{\boldsymbol{\alpha}},\cccovo)$ that satisfies the previous two conditions is optimum.
\end{theorem}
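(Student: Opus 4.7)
The plan is to exploit the variational inequality framework of Appendix~\ref{preliminaries} to reduce the nonlinear problem~\eqref{eq:MACP} to an associated linear problem of the type analyzed in Section~\ref{OptGMAC:Case02}. The key observation is that the image of the rate map $(\boldsymbol{\alpha},\cccov)\mapsto\boldsymbol{\rho}(\boldsymbol{\alpha},\cccov)$ is exactly the GMAC capacity region, which by construction is convex; hence~\eqref{eq:MACP} is equivalent to minimizing $f(\boldsymbol{\rho})$ over this convex set, even when $\feascov$ itself is nonconvex.

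For the necessary direction, I would first invoke the first-order variational inequality (Proposition~\ref{theo:TransLinNec}) in rate space to obtain $\nabla f(\opt{\boldsymbol{\rho}})\cdot(\boldsymbol{\rho}-\opt{\boldsymbol{\rho}})\geq 0$ for every achievable $\boldsymbol{\rho}$, where $\opt{\boldsymbol{\rho}}=\boldsymbol{\rho}(\opt{\boldsymbol{\alpha}},\cccovo)$. With the weights $\{w_k\}$ from~\eqref{eq:theo:LinNecc:00}, this states that $\opt{\boldsymbol{\rho}}$ maximizes $\sum_k w_k\rho_k$ over the capacity region, i.e.\ solves an instance of~\eqref{eq:MACPLinear}. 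Writing $\opt{\boldsymbol{\rho}}=\sum_{m,i}\opt{\alpha}_{mi}\bigl(r_{ki}(\ccovo[(m)])\bigr)_{k=1}^K$ and using that the maximum of a linear functional along a convex combination of feasible points is attained only if each participating point attains it, I would conclude that for every $(m,i)$ with $\opt{\alpha}_{mi}>0$ the vector $(r_{ki}(\ccovo[(m)]))_{k}$ is itself optimum for the linear problem. Item~1 then follows directly from Proposition~\ref{theo:GMACLinearNS}, while item~2 follows from the reformulation of the objective in~\eqref{eq:MACPLinearOneOrd} as in~\eqref{eq:prop:02}, which is exactly~\eqref{eq:theo:LinNecc:01}.

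For the sufficient direction, assuming $f$ is convex in $\boldsymbol{\rho}$, I would use the gradient inequality $f(\boldsymbol{\rho})\geq f(\opt{\boldsymbol{\rho}})-\sum_k w_k(\rho_k-\opt{\rho}_k)$, so that it suffices to verify $\sum_k w_k\opt{\rho}_k\geq\sum_k w_k\rho_k$ for every feasible $\boldsymbol{\rho}$. Condition~2 ensures that every $\ccovo[(m)]$ tied to a positive $\opt{\alpha}_{mi}$ attains the maximum of~\eqref{eq:prop:02}, while condition~1 combined with item~2 of Remark~\ref{theo:GMACLinearRemark} gives $\sum_k w_k r_{ki}(\ccovo[(m)])=\sum_k w_k r_{ki_w}(\ccovo[(m)])$ even for orderings $\pi_i$ that do not themselves follow the increasing order of the weights. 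Averaging against $\opt{\boldsymbol{\alpha}}$ then shows that $\sum_k w_k\opt{\rho}_k$ equals the maximum value of~\eqref{eq:MACPLinearOneOrd}, which by Proposition~\ref{theo:OneCovMatLinObjA} coincides with the maximum of the linear objective over the whole capacity region. This is precisely the sufficient VI condition of Proposition~\ref{theo:TransLinSuf}.

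The main obstacle I anticipate is careful bookkeeping across the two representations. Because $\feascov$ may itself be nonconvex, one cannot apply a first-order condition directly in the $(\boldsymbol{\alpha},\cccov)$ variables; the VI must be lifted to rate space, where convexity of the capacity region (guaranteed by time-sharing and Carath\'eodory) restores the geometry needed to invoke Propositions~\ref{theo:TransLinNec}--\ref{theo:TransLinSuf}. A second subtlety is that $\opt{\boldsymbol{\alpha}}$ may put positive weight on orderings that do not follow the increasing order of the weights; the argument must rely on item~2 of Proposition~\ref{theo:GMACLinearNS} to reconcile these orderings with the canonical solutions of the linear subproblem. Once these two issues are handled, what remains is the linearization calculus encapsulated in Propositions~\ref{theo:TransLinNec}--\ref{theo:TransLinSuf}.
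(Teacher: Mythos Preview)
Your proposal is correct and follows essentially the same route as the paper: lift to rate space, apply Proposition~\ref{theo:TransLinNec} to reduce to the linear problem~\eqref{eq:MACPLinear}, use the convex-combination/LP argument to pin each active $(m,i)$ to an optimal vertex of that linear problem, and then invoke Proposition~\ref{theo:GMACLinearNS} and~\eqref{eq:prop:02} for the two items; the converse is likewise the paper's Step~1/Step~2 argument culminating in Proposition~\ref{theo:TransLinSuf}. The only cosmetic difference is that the paper writes out the LP~\eqref{eq:theo:LinNecc:04}--\eqref{eq:theo:LinNecc:05} explicitly where you appeal to the ``linear functional over a convex combination'' principle, but these are the same observation.
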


\begin{proof}
The proof of necessity for continuously differentiable objectives hinges on the convexity of the GMAC capacity region and Proposition~\ref{theo:TransLinNec} in Appendix~\ref{preliminaries}, whereas the proof of sufficiency for convex objectives follows from applying Proposition~\ref{theo:TransLinSuf} in Appendix~\ref{preliminaries}. See Appendix~\ref{App:Proof:theo:LinNecc} for details.
\end{proof}

Theorem~\ref{theo:SufNecCondGen} provides a necessary optimality condition for problems with continuously differentiable objectives, which is also sufficient for objectives that are additionally convex in the users' rates, but not necessarily convex in $\boldsymbol{\alpha}$ and $\cccov$.

\begin{remark}

$\ $

\begin{enumerate}
\item The weights in~\eqref{eq:theo:LinNecc:00} can be negative, and, for given weights, the conditions of this theorem are decoupled from each other. In particular, the first condition does not involve the covariance matrices, and the second condition does not involve the time-sharing parameters. This observation will be used by the algorithm proposed in Section~\ref{subsec:alg} to find the optimum covariance matrices, decoding orders and time-sharing parameters.

\item Theorem~\ref{theo:SufNecCondGen} does not impose any constraints on $\feascov$. For instance, $\feascov$ can be nonconvex.

\item If the solution of~\eqref{eq:MACP} is unique, the optimum time-sharing matrix $\opt{\boldsymbol{\alpha}}$ has no more than $K+1$ non-zero elements. If, however, the solution of~\eqref{eq:MACP} is not unique, some of the solutions described by Theorem~\ref{theo:SufNecCondGen} may have time-sharing matrices with more than $K+1$ non-zero elements.\hfill$\Box$
\end{enumerate}
\end{remark}

Although Theorem~\ref{theo:SufNecCondGen} provides an explicit characterization of the optimum transmission parameters, it cannot be readily used to obtain these parameters. This is because the gradient of the objective at the optimum rate vector, and hence the weights in~\eqref{eq:theo:LinNecc:00}, are not known \emph{a priori}.

When the gradient at the optimum rate vector is given, finding the optimum transmission parameters of~\eqref{eq:MACP} might be still complex. In particular, when~\eqref{eq:theo:LinNecc:01} has multiple solutions, it is not known \emph{a priori} which solutions should be included in $\cccov$. As such, it is necessary for all possible collections of solutions of~\eqref{eq:theo:LinNecc:01} to be checked for optimality. In general, one optimum solution of~\eqref{eq:theo:LinNecc:01} does not suffice to achieve the optimum transmission parameters of~\eqref{eq:MACP}, as we will show in the following example.

\subsection{An illustrative example}
\label{OptGMAC:FirstExample}
In this section, we present a case in which~\eqref{eq:theo:LinNecc:01} has multiple solutions, none of which alone suffices to achieve the optimum rate vector.

Let us consider a GMAC with two receive antennas and two users with two transmit antennas each. The channel matrices of the users are assumed to be full rank and identical, that is, $\boldsymbol{H}_1=\boldsymbol{H}_2=\boldsymbol{H}$ and $|\boldsymbol{H}|\neq0$. Each transmitter is allowed to transmit using only one of the antennas, and with a maximum power $P$. To maximize proportional fairness~\cite{Kelly}, the optimum covariance matrices and time-sharing matrix must solve
\begin{multline}\label{eq:Example_init:Problem}
    \underset{\boldsymbol{\alpha}\in\mathcal{S}_\alpha,\;\ccov[(m)]\in\feascov,\;m=1,2,3}{\min}\     f\big(\boldsymbol{\rho}(\boldsymbol{\alpha},\cccov)\big)=\\-\log\big(\rho_1(\boldsymbol{\alpha},\cccov)\big)-\log\big(\rho_2(\boldsymbol{\alpha},\cccov)\big),
\end{multline}
\begin{multline}\label{eq:Example_init:P}
\feascov=\{\ccov[]\in\mathcal{D}\;|\;\ccov[]\succeq0,\;\text{tr}(\ucov[][1])\leq P,\;\text{tr}(\ucov[][2])\leq P,\\
[\ucov[][1]]_{11}[\ucov[][1]]_{22}=0,\;[\ucov[][2]]_{11}[\ucov[][2]]_{22}=0\},
\end{multline}
where $\rho_k(\boldsymbol{\alpha},\cccov)$, $k=1,2$, are defined in~\eqref{eq:rate_k}. The nonconvex constraints $[\ucov[][1]]_{11}[\ucov[][1]]_{22}=0$ and $[\ucov[][2]]_{11}[\ucov[][2]]_{22}=0$, and the positive definiteness of $\ucov[][1]$ and $\ucov[][2]$ imply that $\ccov[]$ is diagonal. Since the objective, $f$, in~\eqref{eq:Example_init:Problem} is convex in the users' rates, Theorem~\ref{theo:SufNecCondGen} can be used to characterize the optimum transmission parameters.

To apply this theorem, first, we will find the gradient at the optimum rate vector; second, we will find all optimum composite covariance matrices that solve~\eqref{eq:theo:LinNecc:01}; and third, we will analyze which of these solutions have to be included in the collection of composite covariance matrices, $\cccov$.

To find the gradient at the optimum rate vector, let $\opt{\boldsymbol{x}}=(\opt{x_1},\opt{x_2})^\intercal=\boldsymbol{\rho}(\opt{\boldsymbol{\alpha}},\cccovo)$ be optimum for~\eqref{eq:Example_init:Problem}. From symmetry, the rate vector $(\opt{x_2},\opt{x_1})^\intercal$ is feasible. Now, since the capacity region is convex, Proposition~\ref{theo:TransLinNec} in Appendix~\ref{preliminaries} asserts that ${\opt{\boldsymbol{x}}}^\intercal\nabla f(\opt{\boldsymbol{x}})\leq(\opt{x_2},\opt{x_1})\nabla f(\opt{\boldsymbol{x}})$, which, using the fact that $\nabla f(\opt{\boldsymbol{x}})=-\big(\frac{1}{\opt{x_1}},\frac{1}{\opt{x_2}}\big)^\intercal$, yields $(\opt{x_1}-\opt{x_2})^2\leq0$. This implies that $\opt{x_1}=\opt{x_2}$ and that the gradient components at the optimum rate vector are equal.

Using the fact that the gradient components are equal at the optimum rate vector and invoking the second condition of Theorem~\ref{theo:SufNecCondGen}, it can be seen that the optimum composite covariance matrices solve
\begin{multline}\label{eq:Example_init:Problem_theo4}
    \underset{\ccov[]\in\feascov}{\max}\ \big([(\boldsymbol{H}^\dagger\boldsymbol{H})^{-1}]_{11}+[\ucov[][1]]_{11}+[\ucov[][2]]_{11}\big)\\
    \big([(\boldsymbol{H}^\dagger\boldsymbol{H})^{-1}]_{22}+[\ucov[][1]]_{22}+[\ucov[][2]]_{22}\big).
\end{multline}
The previous problem is not convex because of the nonconvexity of $\feascov$, cf.~\eqref{eq:Example_init:P}. Taking into account the constraints in~\eqref{eq:Example_init:Problem_theo4}, it is straightforward to show that only four points can be optimum, which are those for which the quadruple $\big([\ucov[][1]]_{11},[\ucov[][1]]_{22},[\ucov[][2]]_{11},[\ucov[][2]]_{22}\big)$ equals: 1) $(P,0,0,P)$; 2) $(0,P,P,0)$; 3) $(P,0,P,0)$; and 4) $(0,P,0,P)$. In our example we assume
\begin{subequations}\label{eq:Example_init:gammas}
\begin{equation}
{\color{black}P>\big|[(\boldsymbol{H}^\dagger\boldsymbol{H})^{-1}]_{11}\!-\![(\boldsymbol{H}^\dagger\boldsymbol{H})^{-1}]_{22}\big|,}
\end{equation}
\begin{equation}
{\color{black}P>\frac{\big([(\boldsymbol{H}^\dagger\boldsymbol{H})^{-1}]_{11}\!-\![(\boldsymbol{H}^\dagger\boldsymbol{H})^{-1}]_{22}\big)\big|(\boldsymbol{H}^\dagger\boldsymbol{H})^{-1}\big|}{\big|(\boldsymbol{H}^\dagger\boldsymbol{H})^{-1}\big|-[(\boldsymbol{H}^\dagger\boldsymbol{H})^{-1}]_{11}^2}.}
\end{equation}
\end{subequations}
It can be readily verified that, under the first assumption, only the first two quadruples solve~\eqref{eq:Example_init:Problem_theo4}. Moreover, we will show that, even time-sharing the $K!=2$ decoding orders, neither quadruple alone satisfies Theorem~\ref{theo:SufNecCondGen}. In other words, to achieve the optimum rate vector, two quadruples must be time-shared with their corresponding decoding orders. Towards that end, we will show that, subject to~\eqref{eq:Example_init:gammas}, the two users cannot have equal rates. We will focus on the first quadruple $(P,0,0,P)$; an analogous argument can be applied to the second quadruple $(0,P,P,0)$.

Let the diagonal of $\ccovo[]$ be given by the first quadruple $(P,0,0,P)$. The rates achievable with this $\ccovo[]$ are $x_k=a_1r_{k1}\big(\ccovo[]\big)+a_2r_{k2}\big(\ccovo[]\big)$, $k=1,2$, where $a_1$ and $a_2$ are the time-sharing weights. Let the first decoding order be $\pi_1(1)=1$ and $\pi_1(2)=2$, that is, user 1 is decoded first, and its signal is interfered by that of user 2. Then, from~\eqref{eq:r_ki} and the inequality in~\eqref{eq:theo:OneCovMatFixed_Prop:ineq} in Appendix~\ref{App:Proof:GMACLinear}, it follows that $r_{12}\big(\ccovo[]\big)\geq r_{11}\big(\ccovo[]\big)$ and $r_{21}\big(\ccovo[]\big)\geq r_{22}\big(\ccovo[]\big)$. Using the first quadruple $(P,0,0,P)$ in~\eqref{eq:r_ki} for the rates $r_{11}$ and $r_{21}$, it can be shown that the second assumption of~\eqref{eq:Example_init:gammas} implies that $r_{11}\big(\ccovo[]\big)>r_{21}\big(\ccovo[]\big)$. Thus, for any time-sharing weights $a_1$ and $a_2$, the rate of user $1$ is strictly greater than the rate of user $2$, i.e., $x_1>x_2$. This implies that the optimum rate vector can be achieved only by time-sharing the first two quadruples, but not with one of them only.

\subsection{Two special cases}
\label{OptGMAC:Case04}
The illustrative example in the previous section showed that, in general, the optimum rate vector of~\eqref{eq:MACP} requires time-sharing of multiple composite covariance matrices and decoding orders, which necessitates finding all the solutions of~\eqref{eq:theo:LinNecc:01}, thereby complicating the solution of~\eqref{eq:MACP}. In order to avoid this difficulty, this section identifies two classes of problems for which~\eqref{eq:theo:LinNecc:01} has a unique solution. In both cases, the power constraints, $g_\ell$ in~\eqref{eq:P_set}, $\ell=1,\ldots,L$, are convex in the input covariance matrices, which ensures that the feasible set $\feascov$ in~\eqref{eq:theo:LinNecc:01} is convex. For these classes, the search for a collection of composite covariance matrices, $\cccov$, reduces to the search for only one composite covariance matrix, $\ccov[]$.

\paragraph*{Case I) Nonincreasing objective with respect to the users' rates}
If the objective function, $f$, is nonincreasing in each component, its gradient components are always nonpositive, which implies that the weights computed in~\eqref{eq:theo:LinNecc:00} are nonnegative. This fact ensures that the objective in~\eqref{eq:theo:LinNecc:01} is concave and, together with the convexity of the power constraints, that the optimization problem in~\eqref{eq:theo:LinNecc:01} is convex, so that it can be solved with convex optimization tools. Since the logarithm of the determinant is a strictly concave function, cf.~\cite[p.~74]{Boyd}, the optimum solution is unique.

\paragraph*{Case II)}
Suppose that the power constraints are convex and that $\ucov[][1]\oplus\cdots\oplus\ucov[][K]\in\feascov$ implies that $\ucov[][1]\oplus\cdots\oplus\ucov[][k-1]\oplus\boldsymbol{0}\oplus\ucov[][k+1]\oplus\cdots\oplus\ucov[][K]\in\feascov$, $k=1,\dots,K$. In other words, setting any one of the individual covariance matrices to zero is also feasible. In this case, the weights computed in~\eqref{eq:theo:LinNecc:00} may be negative for some differentiable objectives, which implies that in those cases the problem in~\eqref{eq:theo:LinNecc:01} is not convex. However, since the optimum covariance matrix of users with negative weights is zero, those users can be eliminated from the formulation in~\eqref{eq:theo:LinNecc:01}, resulting in an optimization problem with only nonnegative weights, subsequently, with a unique solution.

\subsection{Problems with one optimum composite covariance matrix}
\label{OptGMAC:Case04b}
In this section, we will show that, when~\eqref{eq:theo:LinNecc:01} has a unique solution, as in the cases described in the previous section, Theorem~\ref{theo:LinNecc} can be used to reduce the complexity of solving and analyzing~\eqref{eq:MACP}. This fact is stated in the following result.

\begin{theorem}\label{theo:OneCovMatGen}
If the objective $f$ in~\eqref{eq:MACP} is continuously differentiable in the users' rates, and if the solution of~\eqref{eq:theo:LinNecc:01} is unique, the optimization problem in~\eqref{eq:MACP} can be solved with one composite covariance matrix, $\ccovo[]$.
\end{theorem}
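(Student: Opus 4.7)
The plan is to invoke the necessity direction of Theorem~\ref{theo:LinNecc} and leverage the uniqueness hypothesis to collapse the optimum collection $\cccovo$ down to a single composite covariance matrix. Since the conclusion is an existence statement (``the optimization problem can be solved with one composite covariance matrix''), it suffices to start from an arbitrary optimum $(\opt{\boldsymbol{\alpha}},\cccovo)$ and exhibit an alternative optimum whose collection contains only one distinct matrix.

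First, I would fix an optimum pair $(\opt{\boldsymbol{\alpha}},\cccovo)$ of \eqref{eq:MACP} and compute the weights $\{w_k\}_{k=1}^K$ from the gradient of $f$ at $\boldsymbol{\rho}(\opt{\boldsymbol{\alpha}},\cccovo)$ via \eqref{eq:theo:LinNecc:00}; these weights are well defined by the continuous differentiability hypothesis. Theorem~\ref{theo:LinNecc} then asserts that for every index $m\in\{1,\dots,K+1\}$ such that $\opt{\alpha}_{mi}>0$ for at least one $i$, the matrix $\ccovo[(m)]$ is an optimum solution of the linearized problem \eqref{eq:theo:LinNecc:01} associated with these weights. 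The uniqueness hypothesis then forces all such $\ccovo[(m)]$ to coincide with a single matrix, which I would call $\ccovo[]$.

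Next, I would define the modified collection $\tilde{\cccov}$ in which every component $\ccov[(m)]$ is set equal to this common $\ccovo[]$, and keep the time-sharing matrix equal to $\opt{\boldsymbol{\alpha}}$. From \eqref{eq:rate_k}, the rate components become
\begin{equation*}
\rho_k(\opt{\boldsymbol{\alpha}},\tilde{\cccov})=\sum_{m=1}^{K+1}\sum_{i=1}^{K!}\opt{\alpha}_{mi}\, r_{ki}(\ccovo[]),
\end{equation*}
and each term of the original sum $\sum_{m,i}\opt{\alpha}_{mi}r_{ki}(\ccovo[(m)])$ is matched termwise: either $\opt{\alpha}_{mi}>0$, in which case $\ccovo[(m)]=\ccovo[]$ by the previous paragraph, or $\opt{\alpha}_{mi}=0$, in which case both terms vanish. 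Hence $\boldsymbol{\rho}(\opt{\boldsymbol{\alpha}},\tilde{\cccov})=\boldsymbol{\rho}(\opt{\boldsymbol{\alpha}},\cccovo)$, so $(\opt{\boldsymbol{\alpha}},\tilde{\cccov})$ achieves the same objective value and is therefore also optimum, while using a single composite covariance matrix $\ccovo[]$.

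The main obstacle is conceptual rather than technical: one must recognize that uniqueness of the solution of \eqref{eq:theo:LinNecc:01} is exactly what is needed to force the $\ccovo[(m)]$ with positive time-sharing weight to agree. Indeed, the illustrative example in Section~\ref{OptGMAC:FirstExample} shows that when \eqref{eq:theo:LinNecc:01} admits multiple solutions, the collapse fails and genuine time-sharing of distinct composite covariance matrices is required. Beyond this, the argument is essentially bookkeeping through the rate formula \eqref{eq:rate_k}.
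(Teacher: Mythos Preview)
Your argument is correct and follows essentially the same route as the paper: start from an optimum $(\opt{\boldsymbol{\alpha}},\cccovo)$, invoke the necessity part of Theorem~\ref{theo:LinNecc} to see that every $\ccovo[(m)]$ with active time-sharing weight solves~\eqref{eq:theo:LinNecc:01}, use uniqueness to identify them all with one matrix, and observe via~\eqref{eq:rate_k} that replacing the inactive components as well leaves the rate vector unchanged. The only cosmetic difference is that the paper names the unique solution $\ccovno[]$ before comparing it to the $\ccovo[(m)]$'s, whereas you deduce equality among the $\ccovo[(m)]$'s directly; the substance is identical.
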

\begin{proof}
Let $(\opt{\boldsymbol{\alpha}},\cccovo)$ solve~\eqref{eq:MACP}, where $\cccovo=\{\ccovo[(m)]\}_{m=1}^{K+1}$. Using this solution, the weights in~\eqref{eq:theo:LinNecc:00} can be readily obtained. Let $\ccovno[]$ be the unique solution of~\eqref{eq:theo:LinNecc:01} for these weights. Then, if $\opt{\alpha}_{mi}>0$ for some ordering $i$, the second condition of Theorem~\ref{theo:LinNecc} implies that $\ccovo[(m)]=\ccovno[]$. Moreover, if $\opt{\alpha}_{mi}=0$ for all orderings $i=1,\dots,K!$, the definition in~\eqref{eq:rate_k} implies that $\ccovo[(m)]$ does not affect the rate vector. As a consequence, the pair $\big(\opt{\boldsymbol{\alpha}},\{\ccovno[]\}_{m=1}^{K+1}\big)$ generates the same rate vector, and hence solves~\eqref{eq:MACP}.
\end{proof}

Theorem~\ref{theo:OneCovMatGen} implies that, when~\eqref{eq:theo:LinNecc:01} has a unique solution, the optimum rate vector of~\eqref{eq:MACP} does not need to be expressed as a function of a time-sharing matrix and a collection of composite covariance matrices. In particular, the time sharing matrix $\boldsymbol{\alpha}$ can be replaced with a time-sharing vector $\boldsymbol{\beta}\in\mathbb{R}^{K!}$ that lies in the unit $K!$-dimensional simplex, $\mathcal{S}_\beta=\{\boldsymbol{\beta}|\boldsymbol{\beta}\geq0,\sum_{i=1}^{K!}\beta_i=1\}$. Using this notation, achievable rate vectors in~\eqref{eq:rate_k} can be expressed as $\hat{\boldsymbol{\rho}}\bigl(\boldsymbol{\beta},\ccov[]\bigr)$, where
\begin{equation}\label{eq:RVC}
\hat\rho_k\left(\boldsymbol{\beta},\ccov[]\right)=\sum_{i=1}^{K!}\beta_ir_{ki}\left(\ccov[]\right),\quad k=1,\ldots,K.
\end{equation}
Notice that the number of variables in~\eqref{eq:RVC} is reduced to $K!+\sum_kN_k^2$. For the two cases of Section~\ref{OptGMAC:Case04}, the optimum rate vector of~\eqref{eq:MACP} can be expressed using this notation. In Sections~\ref{Application} and~\ref{examples} we will consider the cases of Section~\ref{OptGMAC:Case04}, and use the characterization of~\eqref{eq:RVC}.

\section{Algorithm description and convergence analysis}
\label{Application}
In this section, we develop and analyze the convergence of an algorithm that solves a class of the not necessarily convex GMAC optimization problems in~\eqref{eq:MACP}. This class includes the two cases in Section~\ref{OptGMAC:Case04}, wherein the objective $f$ is continuously differentiable and convex in the users' rates, but not necessarily convex in the input covariance matrices and the time-sharing parameters. We will prove the convergence of the algorithm in Section~\ref{subsec:convergence} when $f$ is twice differentiable and nonincreasing in each component.

For the cases in Section~\ref{OptGMAC:Case04}, the solution of~\eqref{eq:theo:LinNecc:01} is unique and Theorem~\ref{theo:OneCovMatGen} enables us to use the notation in~\eqref{eq:RVC} and express~\eqref{eq:MACP} as
\begin{equation}\label{eq:MACP2}
\underset{\boldsymbol{\beta}\in\mathcal{S}_\beta,\ccov[]\in\feascov}{\min}\quad f\big(\hat{\boldsymbol{\rho}}(\boldsymbol{\beta},\ccov[])\big).
\end{equation}
To solve this problem we will develop an algorithm based on Theorem~\ref{theo:SufNecCondGen}.

\subsection{Proposed algorithm}\label{subsec:alg}
The proposed algorithm solves~\eqref{eq:MACP2} iteratively. At each iteration $t$, the algorithm uses $\ccovo(t-1)$ and $\opt{\boldsymbol{\beta}}(t-1)$, obtained at the previous iteration, to obtain $\ccovo(t)$ and $\opt{\boldsymbol{\beta}}(t)$. In particular, the algorithm uses~\eqref{eq:theo:LinNecc:00} to compute the weights $\{w_k(t)\}_{k=1}^K$ at the rate vector $\boldsymbol{x}=\hat{\boldsymbol{\rho}}\big(\opt{\boldsymbol{\beta}}(t-1),\ccovo(t-1)\big)$. Subsequently, Theorem~\ref{theo:SufNecCondGen} is used to find the time-sharing vector $\opt{\boldsymbol{\beta}}(t)$ and the composite covariance matrix $\ccovo[](t)$. We will show, in Section~\ref{subsec:convergence}, that $f\big(\hat{\boldsymbol{\rho}}\big(\opt{\boldsymbol{\beta}}(t),\ccovo(t)\big)\big)$ converges to the optimum solution of~\eqref{eq:MACP2}.

To complete the description of the algorithm, it remains to show how $\opt{\boldsymbol{\beta}}(t)$ and $\ccovo[](t)$ are obtained. We begin by obtaining $\ccovo[](t)$. To do that, let $\ccovno[](t)$ be the solution of~\eqref{eq:theo:LinNecc:01} for the weights $\{w_k(t)\}_{k=1}^K$, i.e.,
\begin{equation}\label{eq:Alg:LP_Qt}
    \underset{\ccov[]\in\feascov}{\max}\ \displaystyle\sum_{k=1}^K\big(w_k(t)-w_{k-1}(t)\big)\log\Bigl|\boldsymbol{I}+\sum_{j\geq k}\boldsymbol{H}_j\ucov[][j]\boldsymbol{H}_j^\dagger\Bigr|,
\end{equation}
where the users are relabeled such that $w_1(t)\leq\cdots\leq w_K(t)$ and $w_0(t)=0$ for all $t$. We will choose $\ccovo[](t)$ to be a convex combination of $\ccovo[](t-1)$ and $\ccovno[](t)$. In particular,
\begin{equation}\label{eq:Alg:cov}
    \ccovo(t)=\varepsilon(t)\ccovno(t)+\big(1-\varepsilon(t)\big)\ccovo(t-1),
\end{equation}
where $\varepsilon(t)$ is selected as the Armijo stepsize rule~\cite{Meisam_Convergence}, i.e., for some $\sigma,\chi\in(0,1)$, $\varepsilon(t)$ is chosen to be the largest element of $\{\chi^n\}_{n=0,1,\dots}$ satisfying
\begin{equation}\label{eq:Alg:epsilon}
h(\varepsilon(t),t)\leq h(0,t)+\sigma\varepsilon(t)\frac{\partial h(\delta,t)}{\partial\delta}\bigg|_{\delta=0},
\end{equation}
where
\begin{equation}\label{eq:Alg:h}
h(\delta,t)=f\big(\hat{\boldsymbol{\rho}}\big(\opt{\boldsymbol{\beta}}(t-1),\delta\ccovno[](t)+(1-\delta)\ccovo(t-1)\big)\big). \end{equation}
The following result shows that the objective function $f$ does not increase using this stepsize.

\begin{proposition}\label{theo:Alg:Stepsize}
Let the objective function $f$ be nonincreasing in each component. Then, for every iteration $t$ we have that
\begin{equation}\label{eq:theo:Alg:f_decrease}
    f\big(\hat{\boldsymbol{\rho}}\big(\opt{\boldsymbol{\beta}}(t-1),\ccovo(t)\big)\big)\leq f\big(\hat{\boldsymbol{\rho}}\big(\opt{\boldsymbol{\beta}}(t-1),\ccovo(t-1)\big)\big).
\end{equation}
\end{proposition}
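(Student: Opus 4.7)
The strategy is to reduce the claim to a descent-direction property at $\delta=0$ and then invoke the Armijo stepsize rule directly. Rearranging (\ref{eq:Alg:epsilon}) gives
\[
h(\varepsilon(t),t)-h(0,t)\;\leq\;\sigma\varepsilon(t)\,\frac{\partial h(\delta,t)}{\partial\delta}\bigg|_{\delta=0},
\]
and because $\sigma,\varepsilon(t)>0$, it suffices to establish $\partial h(\delta,t)/\partial\delta|_{\delta=0}\leq 0$ to obtain (\ref{eq:theo:Alg:f_decrease}). Applying the chain rule to (\ref{eq:Alg:h}), defining $\Psi_i^{(t)}(\ccov[])=\sum_k w_k(t)\,r_{ki}(\ccov[])$, and using $w_k(t)=-\partial f/\partial x_k$ evaluated at $\hat{\boldsymbol{\rho}}(\opt{\boldsymbol{\beta}}(t-1),\ccovo(t-1))$, one arrives at
\[
\frac{\partial h(\delta,t)}{\partial\delta}\bigg|_{\delta=0}
\;=\;-\sum_{i=1}^{K!}\opt{\beta}_i(t-1)\,\bigl\langle\nabla_{\ccov[]}\Psi_i^{(t)}(\ccovo(t-1)),\,\ccovno(t)-\ccovo(t-1)\bigr\rangle.
\]
The task reduces to proving that the weighted sum of directional derivatives on the right is nonnegative.

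The crucial structural input is that $f$ nonincreasing forces $w_k(t)\geq 0$. After the relabeling $w_1(t)\leq\cdots\leq w_K(t)$, the objective in (\ref{eq:Alg:LP_Qt})---which equals $\Psi^{(t)}_{i_w}(\ccov[])=\sum_k\bigl(w_k(t)-w_{k-1}(t)\bigr)\log\big|\boldsymbol{I}+\sum_{j\geq k}\boldsymbol{H}_j\ucov[][j]\boldsymbol{H}_j^\dagger\big|$, with $i_w$ the increasing-weight ordering---is concave in $\ccov[]$ on the convex set $\feascov$, since $\log\det$ is concave and the telescoped coefficients $w_k(t)-w_{k-1}(t)$ are nonnegative. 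As $\ccovno(t)$ maximizes $\Psi^{(t)}_{i_w}$ over $\feascov$ and $\ccovo(t-1)\in\feascov$, the first-order concavity inequality produces
\[
\bigl\langle\nabla_{\ccov[]}\Psi^{(t)}_{i_w}(\ccovo(t-1)),\,\ccovno(t)-\ccovo(t-1)\bigr\rangle\;\geq\;\Psi^{(t)}_{i_w}(\ccovno(t))-\Psi^{(t)}_{i_w}(\ccovo(t-1))\;\geq\;0,
\]
which is exactly the desired descent inequality for the summand $i=i_w$.

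The hard part, and what I expect to be the main obstacle, is transferring this single-ordering inequality to the full convex combination over $i$ appearing in the derivative formula, because $\Psi_i^{(t)}$ is generally not concave when ordering $i$ does not follow the increasing order of the current weights (the telescoped coefficients $\tilde w_j-\tilde w_{j-1}$ with $\tilde w_j=w_{\pi_i(j)}(t)$ can turn negative). My plan is to invoke Theorem~\ref{theo:SufNecCondGen} applied at iteration $t-1$ together with the second item of Remark~\ref{theo:GMACLinearRemark}: the support of $\opt{\boldsymbol{\beta}}(t-1)$ can be taken to consist only of orderings that achieve the same per-user rate vector as $i_w$ at $\ccovo(t-1)$, which yields $\Psi_i^{(t)}(\ccovo(t-1))=\Psi^{(t)}_{i_w}(\ccovo(t-1))$ for every $i$ in the support. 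Combining this with the first-order optimality of $\ccovno(t)$ and the convexity of $\feascov$ should allow each $\Psi_i^{(t)}$ to be replaced by $\Psi^{(t)}_{i_w}$ in the descent expression; the delicate step will be promoting this pointwise value equality into the required inner-product inequality, which I anticipate requires a finer case analysis of the orderings that can carry positive weight in $\opt{\boldsymbol{\beta}}(t-1)$.
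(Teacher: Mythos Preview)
Your overall strategy---reduce to $\partial h/\partial\delta|_{\delta=0}\le 0$, rewrite via the chain rule in terms of the weighted-sum-rate functions $\Psi_i^{(t)}$, and use concavity plus optimality of $\ccovno(t)$---is exactly the paper's approach. The gap is the step you yourself flagged as ``delicate,'' and its resolution is simpler than the case analysis you anticipate.

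The mistake is to extract from Remark~\ref{theo:GMACLinearRemark} only a \emph{pointwise value} equality $\Psi_i^{(t)}(\ccovo(t-1))=\Psi_{i_w}^{(t)}(\ccovo(t-1))$; that alone does not control the gradient and cannot be promoted to the inner-product inequality you need. What Remark~\ref{theo:GMACLinearRemark}(2) actually buys you is stronger: the support of $\opt{\boldsymbol{\beta}}(t-1)$ may be taken to contain only orderings that \emph{follow the increasing order} of the weights $\{w_k(t)\}$. Any two such orderings differ only by transpositions of adjacent users with \emph{equal} weights; by~\eqref{eq:theo:GMACLinear:04} such a swap preserves the two-user sum rate, hence preserves the weighted sum as well. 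Consequently $\Psi_i^{(t)}(\ccov)=\Psi_{i_w}^{(t)}(\ccov)$ for \emph{every} $\ccov\in\feascov$ and every $i$ in the support, so the gradients coincide and each inner product in your sum equals the one you already bounded. Equivalently, as the paper puts it, $\hat h(\delta,t)\triangleq\sum_k w_k(t)\hat\rho_k(\opt{\boldsymbol{\beta}}(t-1),\delta\ccovno(t)+(1-\delta)\ccovo(t-1))$ coincides with $\Psi_{i_w}^{(t)}$ along the whole segment, is therefore concave in $\delta$, and the first-order concavity inequality at $\delta=0$ yields $\partial\hat h/\partial\delta|_{\delta=0}\ge \hat h(1,t)-\hat h(0,t)\ge 0$, which finishes the proof.
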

\begin{proof}
See Appendix~\ref{App:Proof:Alg:Stepsize}.
\end{proof}

After computing $\ccovo[](t)$, the time-sharing vector $\opt{\boldsymbol{\beta}}(t)$ is chosen to solve the following convex optimization problem:
\begin{equation}\label{eq:Algorithm:Prob02}
    \underset{\boldsymbol{\beta}\in\mathcal{S}_\beta}{\min}\quad f\big(\hat{\boldsymbol{\rho}}\big(\boldsymbol{\beta},\ccovo[](t)\big)\big).
\end{equation}
Hence, it can be shown that
\begin{equation}\label{eq:theo:Alg:f_decrease2}
    f\big(\hat{\boldsymbol{\rho}}\big(\opt{\boldsymbol{\beta}}(t),\ccovo(t)\big)\big)\leq f\big(\hat{\boldsymbol{\rho}}\big(\opt{\boldsymbol{\beta}}(t-1),\ccovo(t)\big)\big).
\end{equation}

Proposition~\ref{theo:Alg:Stepsize} and~\eqref{eq:theo:Alg:f_decrease2} imply that the value of the objective at the end of each iteration is equal to or less than its value at the end of the previous iteration. The steps of this algorithm are summarized in the pseudocode of Algorithm~\ref{Alg:Algorithm}.
\begin{algorithm}[b]
\caption{Computation of the optimum transmission parameters of~\eqref{eq:MACP2}.}
\label{Alg:Algorithm}
\begin{algorithmic}[1]
\STATE Initialize $\opt{\boldsymbol{\beta}}(0)$ and $\ccovo[](0)$.\\
\FOR{$t=1,2,\dots$}
    \STATE Use~\eqref{eq:theo:LinNecc:00} to compute $\{w_k(t)\}_{k=1}^K$ at the rate vector $\boldsymbol{x}=\hat{\boldsymbol{\rho}}\big(\opt{\boldsymbol{\beta}}(t-1),\ccovo(t-1)\big)$.\\
    \STATE Search for the optimum composite covariance matrix, $\ccovno(t)$, that solves~\eqref{eq:theo:LinNecc:01} for the weights $\{w_k(t)\}_{k=1}^K$.
    \STATE Use~\eqref{eq:Alg:cov} and~\eqref{eq:Alg:epsilon} to find the optimum composite covariance matrix at iteration $t$, $\ccovo(t)$.
    \STATE Solve~\eqref{eq:Algorithm:Prob02} to obtain $\opt{\boldsymbol{\beta}}(t)$.
\ENDFOR
\end{algorithmic}
\end{algorithm}

Before studying the convergence of this algorithm, we will show that it can be readily used to generate lower bounds on the objective. In particular, we have the following result.

\begin{proposition}\label{theo:Alg:Error}
Let $\opt{\boldsymbol{x}}$ be the optimum rate vector. Then, the following inequalities hold:
\begin{multline}\label{eq:Algorithm:eq02}
    f\big(\hat{\boldsymbol{\rho}}\big(\opt{\boldsymbol{\beta}}\!(t),\ccovo\!(t)\big)\big)\!+\!
    \sum_{k=1}^K\!w_k(t\!+\!1)\!\Big(\hat\rho_k\!\big(\opt{\boldsymbol{\beta}}\!(t),\ccovo\!(t)\big)-\\
    \hat\rho_k\!\big(\opt{\boldsymbol{\beta}}\!(t),\ccovno\!(t\!+\!1)\big)\!\Big)\leq
    f(\opt{\boldsymbol{x}})\leq f\big(\hat{\boldsymbol{\rho}}\big(\opt{\boldsymbol{\beta}}\!(t),\ccovo\!(t)\big)\big).
\end{multline}
\end{proposition}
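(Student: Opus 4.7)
The right--hand inequality is immediate: the pair $(\opt{\boldsymbol{\beta}}(t),\ccovo(t))$ is feasible for~\eqref{eq:MACP2}, so $\hat{\boldsymbol{\rho}}(\opt{\boldsymbol{\beta}}(t),\ccovo(t))$ lies in the GMAC capacity region and $\opt{\boldsymbol{x}}$ minimizes $f$ over that region. For the left--hand (lower) bound, the plan is to start from the convexity of $f$ in the users' rates (available in both cases of Section~\ref{OptGMAC:Case04}). Setting $\boldsymbol{y}=\hat{\boldsymbol{\rho}}(\opt{\boldsymbol{\beta}}(t),\ccovo(t))$ and using $w_k(t+1)=-\partial f/\partial x_k|_{\boldsymbol{y}}$ from~\eqref{eq:theo:LinNecc:00}, the first--order subgradient inequality for convex $f$ yields
\begin{equation*}
f(\opt{\boldsymbol{x}})\ge f(\boldsymbol{y})+\sum_{k=1}^K w_k(t+1)\bigl(y_k-\opt{x_k}\bigr).
\end{equation*}
Comparing with~\eqref{eq:Algorithm:eq02} reduces the proof to the single scalar inequality $\sum_k w_k(t+1)\,\opt{x_k}\le \sum_k w_k(t+1)\,\hat{\rho}_k\bigl(\opt{\boldsymbol{\beta}}(t),\ccovno(t+1)\bigr)$.

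This reduced inequality would be established by combining two layers of maximization supplied by the algorithm. By~\eqref{eq:prop:02}, $\ccovno(t+1)$ maximizes $\sum_k w_k(t+1)\,r_{k i_w}(\ccov[])$ over $\ccov[]\in\feascov$, where $i_w$ is the ordering following the increasing order of the weights $\{w_k(t+1)\}_{k=1}^K$. Moreover, by Proposition~\ref{theo:GMACLinearNS}, for any fixed $\ccov[]$ this ordering maximizes $\sum_k w_k(t+1)\,r_{ki}(\ccov[])$ over all orderings $i$. Using Theorem~\ref{theo:OneCovMatGen} to write $\opt{\boldsymbol{x}}=\hat{\boldsymbol{\rho}}(\boldsymbol{\beta}^{\sharp},\ccov[\sharp])$ for some feasible pair $(\boldsymbol{\beta}^{\sharp},\ccov[\sharp])$ and chaining these two inequalities produces $\sum_k w_k(t+1)\,\opt{x_k}\le\sum_k w_k(t+1)\,r_{k i_w}(\ccovno(t+1))$. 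To convert the right--hand side into $\sum_k w_k(t+1)\,\hat{\rho}_k(\opt{\boldsymbol{\beta}}(t),\ccovno(t+1))$, I would exploit the KKT conditions for the convex sub--problem~\eqref{eq:Algorithm:Prob02}: they force every ordering in the support of $\opt{\boldsymbol{\beta}}(t)$ to achieve the maximum of $\sum_k w_k(t+1)\,r_{ki}(\ccovo(t))$ over $i$, and item~2 of Proposition~\ref{theo:GMACLinearNS} then identifies such orderings as producing the same rate vector as $i_w$.

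The delicate step is the last one. The KKT characterization certifies equivalence of the support orderings to $i_w$ at $\ccov[]=\ccovo(t)$, while the identity is needed at $\ccov[]=\ccovno(t+1)$. Transferring the equivalence relies on the componentwise form of Proposition~\ref{theo:GMACLinearNS}(2) together with the polymatroid structure underlying~\eqref{eq:r_ki}, and is where the technical work of the proof concentrates. All the other ingredients (convexity of $f$, LP optimality of $\ccovno(t+1)$, and the weight--monotonicity characterization of $i_w$) are by now standard and essentially reproduce the Frank--Wolfe duality gap applied to the capacity region.
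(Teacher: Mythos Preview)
Your reduction is exactly the paper's: the upper bound is trivial, convexity of $f$ gives
\[
f(\opt{\boldsymbol{x}})-f\big(\hat{\boldsymbol{\rho}}(\opt{\boldsymbol{\beta}}(t),\ccovo(t))\big)\;\ge\;\sum_{k}w_k(t{+}1)\bigl(\hat\rho_k(\opt{\boldsymbol{\beta}}(t),\ccovo(t))-\opt{x}_k\bigr),
\]
and everything hinges on the scalar inequality $\sum_k w_k(t{+}1)\opt{x}_k\le\sum_k w_k(t{+}1)\hat\rho_k\bigl(\opt{\boldsymbol{\beta}}(t),\ccovno(t{+}1)\bigr)$.

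Where you diverge from the paper is in the last step, and you are making it harder than it is. You try to invoke item~2 of Proposition~\ref{theo:GMACLinearNS} at $\ccovo(t)$ and then ``transfer'' that rate--vector identity to $\ccovno(t{+}1)$ via the polymatroid structure. That transfer is indeed delicate, because item~2 is a statement tied to a particular covariance matrix and does not automatically persist when $\ccov$ changes. The paper does not attempt this. Instead, it uses that the orderings in the support of $\opt{\boldsymbol{\beta}}(t)$ can be taken to satisfy item~1 of Proposition~\ref{theo:GMACLinearNS}, i.e., they \emph{follow the increasing order of the weights} $\{w_k(t{+}1)\}$. This is justified by applying Theorem~\ref{theo:SufNecCondGen} to the sub-problem~\eqref{eq:Algorithm:Prob02} (this is the same mechanism used in Appendix~\ref{App:Proof:Alg:Stepsize}); by Remark~\ref{theo:GMACLinearRemark}(2), one may always select a minimizer of~\eqref{eq:Algorithm:Prob02} whose support orderings are of this type. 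The point is that item~1 is a property of the permutation alone, not of the covariance matrix: two orderings that both follow the increasing weight order differ only by swaps among users with equal weights, and therefore give the same weighted sum $\sum_k w_k(t{+}1)r_{ki}(\ccov)$ for \emph{every} $\ccov$. Hence
\[
\sum_k w_k(t{+}1)\,\hat\rho_k\bigl(\opt{\boldsymbol{\beta}}(t),\ccovno(t{+}1)\bigr)=\sum_k w_k(t{+}1)\,r_{k i_w}\!\bigl(\ccovno(t{+}1)\bigr)
\]
directly, with no transfer argument required, and the right--hand side dominates $\sum_k w_k(t{+}1)\opt{x}_k$ by the LP optimality of $(i_w,\ccovno(t{+}1))$ that you already established. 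So your ``delicate step'' disappears once you use item~1 rather than item~2.
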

\begin{proof}
The second inequality follows from the optimality of $\opt{\boldsymbol{x}}$ and the feasibility of $\opt{\boldsymbol{\beta}}(t)$ and $\ccovo(t)$. See Appendix~\ref{App:Proof:Alg:Error} for the proof of the first inequality.
\end{proof}

This proposition implies that the left hand side of~\eqref{eq:Algorithm:eq02} is a lower bound on the objective value at the optimum, and can hence be used as a stopping criterion for Algorithm~\ref{Alg:Algorithm}.

\subsection{Convergence analysis}\label{subsec:convergence}
In this section, we will show that Algorithm~\ref{Alg:Algorithm} is guaranteed to converge to the global optimal of~\eqref{eq:MACP2}, provided that the conditions of the following theorem are satisfied.

\begin{theorem}\label{theo:Convergence}
Let Algorithm~\ref{Alg:Algorithm} be used to solve the optimization problem in~\eqref{eq:MACP2} and suppose that the following conditions are satisfied:
\begin{enumerate}
\item The power constraint set $\feascov$ is convex in $\ccov$;
\item The objective $f$ is second order differentiable, nonincreasing in each component, and convex in the users' rates, $\hat{\boldsymbol{\rho}}$, but not necessarily convex in the time-sharing vector, $\boldsymbol{\beta}$, and the composite covariance matrix, $\ccov$.
\end{enumerate}
It follows that Algorithm~\ref{Alg:Algorithm} converges to the optimum time-sharing vector and the optimum composite covariance matrix, that is, if $\opt{\boldsymbol{x}}$ is the optimum rate vector, then
\begin{equation}
\lim_{t\rightarrow\infty}f\big(\hat{\boldsymbol{\rho}}\big(\opt{\boldsymbol{\beta}}(t),\ccovo(t)\big)\big)-f(\opt{\boldsymbol{x}})=0.
\end{equation}
\end{theorem}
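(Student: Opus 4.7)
I would view Algorithm~\ref{Alg:Algorithm} as a Frank--Wolfe-type descent on $F(\boldsymbol{\beta},\ccov[]):=f\bigl(\hat{\boldsymbol{\rho}}(\boldsymbol{\beta},\ccov[])\bigr)$ that alternates an Armijo-based step on $\ccov[]$ with exact convex minimization over $\boldsymbol{\beta}$, and then use Proposition~\ref{theo:Alg:Error} to convert convergence of the iterates into convergence of the objective. First, Proposition~\ref{theo:Alg:Stepsize} and inequality~\eqref{eq:theo:Alg:f_decrease2} show that $f_t:=f(\hat{\boldsymbol{\rho}}(\opt{\boldsymbol{\beta}}(t),\ccovo[](t)))$ is nonincreasing, and compactness of $\mathcal{S}_\beta\times\feascov$ (from the boundedness of $\feascov$) together with continuity of $f$ yields $f_t\geq f(\opt{\boldsymbol{x}})$ for all $t$. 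Thus $f_t\downarrow f^\infty\geq f(\opt{\boldsymbol{x}})$, and by Proposition~\ref{theo:Alg:Error} it suffices to show
\begin{equation*}
G_t:=\sum_{k=1}^K w_k(t+1)\bigl[\hat\rho_k\bigl(\opt{\boldsymbol{\beta}}(t),\ccovno[](t+1)\bigr)-\hat\rho_k\bigl(\opt{\boldsymbol{\beta}}(t),\ccovo[](t)\bigr)\bigr]\longrightarrow 0.
\end{equation*}

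Next I would exploit smoothness of $F$. Twice-differentiability of $f$ in the rates together with smoothness of the log-determinant rate functions on the compact feasible set implies that $F$ has a Lipschitz-continuous gradient. A standard quadratic-upper-bound analysis of the Armijo rule~\eqref{eq:Alg:epsilon} then produces, for some $c>0$ depending on the Lipschitz constant, the diameter of $\feascov$, and the Armijo parameters $\sigma,\chi$, the per-iteration decrease $f_{t-1}-f_t\geq c\,\min\{|\phi_t|,\,|\phi_t|^2\}$, where $\phi_t:=\nabla_{\ccov[]}F(\opt{\boldsymbol{\beta}}(t-1),\ccovo[](t-1))\cdot\bigl(\ccovno[](t)-\ccovo[](t-1)\bigr)$ is the directional derivative along the search direction. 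Telescoping against $\sum_t(f_{t-1}-f_t)\leq f_0-f^\infty<\infty$ forces $|\phi_t|\to 0$.

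The key remaining step is to link $|\phi_{t+1}|$ to $G_t$ by exploiting the structure of the subproblems. The subproblem~\eqref{eq:Alg:LP_Qt} defines $\ccovno[](t+1)$ as the global maximizer of the concave function $\ell_{t+1}(\ccov[]):=\sum_k\bigl(w_k(t+1)-w_{k-1}(t+1)\bigr)\log\bigl|\boldsymbol{I}+\sum_{j\geq k}\boldsymbol{H}_j\ucov[][j]\boldsymbol{H}_j^\dagger\bigr|$, which by~\eqref{eq:prop:02} equals $\sum_k w_k(t+1)\,r_{ki_w}(\ccov[])$ for the weight-matching ordering $i_w$. Meanwhile, first-order optimality of $\opt{\boldsymbol{\beta}}(t)$ in the convex problem~\eqref{eq:Algorithm:Prob02}, combined with Proposition~\ref{theo:GMACLinearNS}, forces $\opt{\boldsymbol{\beta}}(t)$ to be supported only on orderings $i$ attaining $\max_i\sum_k w_k(t+1)r_{ki}(\ccovo[](t))$. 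These two facts give
\begin{equation*}
\sum_k w_k(t+1)\hat\rho_k(\opt{\boldsymbol{\beta}}(t),\ccovo[](t))=\ell_{t+1}(\ccovo[](t)),\quad \sum_k w_k(t+1)\hat\rho_k(\opt{\boldsymbol{\beta}}(t),\ccovno[](t+1))\leq\ell_{t+1}(\ccovno[](t+1)),
\end{equation*}
so $G_t\leq\ell_{t+1}(\ccovno[](t+1))-\ell_{t+1}(\ccovo[](t))$. Concavity of $\ell_{t+1}$ and convexity of $\feascov$ then let one bound the right-hand side by the directional derivative of $\ell_{t+1}$ at $\ccovo[](t)$ along $\ccovno[](t+1)-\ccovo[](t)$, which in turn is controlled by $|\phi_{t+1}|$, giving $G_t\to 0$ and hence $f^\infty=f(\opt{\boldsymbol{x}})$.

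\textbf{Main obstacle.} The delicate piece is the final bound, translating the concave subproblem gap $\ell_{t+1}(\ccovno[](t+1))-\ell_{t+1}(\ccovo[](t))$ into the Frank--Wolfe directional derivative $|\phi_{t+1}|$ of $F$. The identification $\sum_k w_k(t+1)\hat\rho_k(\opt{\boldsymbol{\beta}}(t),\ccovo[](t))=\ell_{t+1}(\ccovo[](t))$ follows cleanly from the optimality of $\opt{\boldsymbol{\beta}}(t)$ and Proposition~\ref{theo:GMACLinearNS}, but $\ell_{t+1}$ and $\sum_k w_k(t+1)r_{ki}$ can in principle differ in their gradients at $\ccovo[](t)$ even for orderings $i$ at which they agree in value; reconciling them requires an auxiliary argument exploiting smoothness of the log-determinant rate functions and the nonnegativity of $\ell_{t+1}(\ccov[])-\sum_k w_k(t+1)r_{ki}(\ccov[])$ so that the standard Frank--Wolfe convergence machinery applies to the present block structure.
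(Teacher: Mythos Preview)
Your plan matches the paper's proof almost step for step: both combine the monotone descent from Proposition~\ref{theo:Alg:Stepsize} and~\eqref{eq:theo:Alg:f_decrease2} with the error bound of Proposition~\ref{theo:Alg:Error}, reduce everything to showing that the one--dimensional derivative along the Armijo search direction vanishes, and then extract that from a quadratic upper bound driven by the second--order hypothesis on $f$ (this is exactly the paper's~(38)--(45)). The only substantive divergence is at the step you flag as the ``main obstacle,'' and there your proposed fix does not close.

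Concretely, the nonnegativity of $g_i(\ccov):=\ell_{t+1}(\ccov)-\sum_k w_k(t+1)r_{ki}(\ccov)$ together with $g_i(\ccovo(t))=0$ only yields the one--sided condition $g_i'(0^+)\ge 0$ along $\ccovno(t+1)-\ccovo(t)$; in your notation this gives $\psi'(0)\geq -\phi_{t+1}$, not $\leq$, so $|\phi_{t+1}|\to 0$ does not force $\psi'(0)\to 0$ and hence not $G_t\to 0$. The paper does not try to compare the two gradients through an inequality; instead it observes (eqs.~(32)--(33) in the proof of Proposition~\ref{theo:Alg:Stepsize}) that, after the exact $\boldsymbol{\beta}$--minimization, $\opt{\boldsymbol{\beta}}(t)$ is supported only on orderings that are optimal for the weights $\{w_k(t+1)\}$, and by Proposition~\ref{theo:GMACLinearNS} these can be taken to follow the increasing order of the weights. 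With that choice one has the \emph{identity} $\sum_k w_k(t+1)\hat\rho_k\bigl(\opt{\boldsymbol{\beta}}(t),\ccov\bigr)=\ell_{t+1}(\ccov)$ for every $\ccov$, not just at $\ccovo(t)$. Hence $\hat h(\delta,t+1)=\psi(\delta)$ as functions, $\psi'(0)=-\phi_{t+1}$ exactly, and the concavity bound $G_t\le\psi'(0)$ becomes the paper's~(31), closing the loop with no residual gap. Replacing your nonnegativity argument by this function--level identification is the missing ingredient.
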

\begin{proof}
The proof uses the bound in Proposition~\ref{theo:Alg:Error} as $t$ goes to $\infty$. See details in Appendix~\ref{App:Proof:Convergence}.
\end{proof}

The merit of Theorem~\ref{theo:Convergence} is that identifies a class of, potentially intricate, problems for which Algorithm~\ref{Alg:Algorithm} converges to the optimum solution. In Section~\ref{examples} we will use this algorithm to solve two instances of this class of problems.

\section{Numerical examples}\label{examples}
This section provides two instances of the problem in~\eqref{eq:MACP} to illustrate the applicability of the information theoretic results of Section~\ref{NonLinear} in practical communication scenarios. In both instances, Algorithm~\ref{Alg:Algorithm} is used to solve a two-user GMAC optimization problem with an objective that is convex in the users rates but nonconvex in the transmission parameters, and each user is assumed to have two transmit antennas and a power budget of $P=10$~dB. For both instances the destination has two receive antennas and the channel matrices are given by
\begin{equation}
    \boldsymbol{H}_1=\left(
    \begin{array}{cc}
        0.32  & -0.06\\
        -0.72 & -0.88
    \end{array}
    \right)+\jmath\left(
    \begin{array}{cc}
        -0.15 & -1.38\\
        -1.34 & -0.01
    \end{array}
    \right),
\end{equation}
\begin{equation}
    \boldsymbol{H}_2=\left(
    \begin{array}{cc}
        -0.21 & 0.29\\
        -0.08  & 0.91
    \end{array}
    \right)+\jmath\left(
    \begin{array}{cc}
        -0.65  & 0.15\\
        0.13 & 1.39
    \end{array}
    \right).
\end{equation}
For these channel matrices, the GMAC capacity region is the one contained inside the thick solid and dotted lines in Figure~\ref{fig:Capacity} for the instance considered in Example~1. The same region is shown in Figure~\ref{fig:Capacity2} for the instance considered in Example~2.
\begin{figure}
\includegraphics[width=\columnwidth]{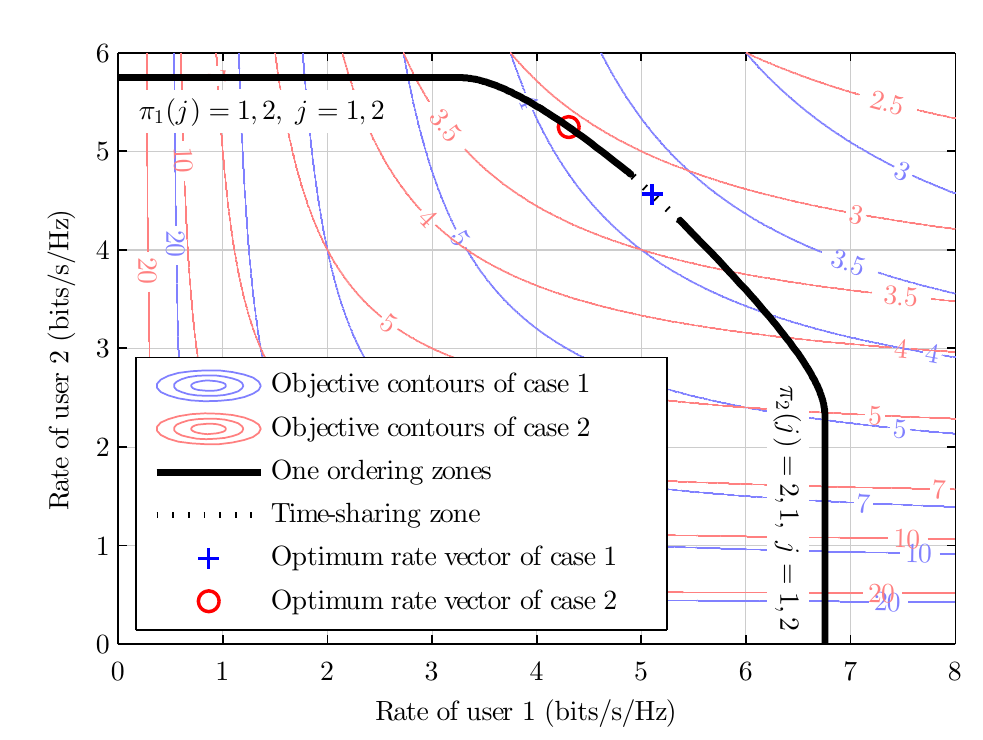}
\caption{The capacity region and the optimum rate vectors corresponding to the two cases considered in Example~1. The time-sharing zone is highlighted with a dotted line.}
\label{fig:Capacity}
\end{figure}

\subsection{Example 1}\label{subsec:app}
In this example, the objective is to minimize the total completion time~\cite{CompletionTime}, that is, the time required to transmit the data stored in the buffers of the two users. Let $b_1$ and $b_2$  represent the amount of data stored in the buffers of user~1 and~2, respectively~\cite{CompletionTime}. In this case, the optimization problem that yields the smallest total completion time can be expressed as
\begin{subequations}\label{eq:Example:Problem}
\begin{align}
    \underset{\boldsymbol{\beta},\ccov}{\min}\quad  &\frac{b_1}{\hat\rho_1\left(\boldsymbol{\beta},\ccov\right)}+\frac{b_2}{\hat\rho_2\left(\boldsymbol{\beta},\ccov\right)},\label{eq:Example:Problem:Obj}\\
    \text{subject to}\quad &\beta_1+\beta_2=1,\;\beta_i\geq0,\;i=1,2,\\
    &\boldsymbol{Q}_k\succeq0,\;\lambda\leq\text{tr}\left(\boldsymbol{Q}_k\right)\leq P,\;k=1,2,\label{eq:Example:Problem:Const03}
\end{align}
\end{subequations}
where $\lambda>0$ is a small scalar that ensures that the objective second order derivative is bounded, and hence the conditions of Theorem~\ref{theo:Convergence} are satisfied. Subsequently, Algorithm~\ref{Alg:Algorithm} converges to the optimum solution of~\eqref{eq:Example:Problem}. We note that the objective in this problem is highly nonconvex in $\boldsymbol{\beta}$ and $\ccov$. However, since the objective function in~\eqref{eq:Example:Problem:Obj} is convex in the users' rates, Theorem~\ref{theo:SufNecCondGen} can be used to characterize the optimum transmission parameters. Moreover, since the objective is nonincreasing in the users' rates, this problem represents an instance of case I in Section~\ref{OptGMAC:Case04}. From Theorem~\ref{theo:OneCovMatGen}, it follows that only one composite covariance matrix is needed, so that the formulation used in~\eqref{eq:Example:Problem} suffices to reach the optimum rate vector.

We consider two cases: 1) $(b_1,b_2)=(10,8)$; and 2) $(b_1,b_2)=(5,10)$. The contour lines of the objective for the two cases are depicted in
Figure~\ref{fig:Capacity}, and the optimum rate vectors are marked by the symbols `$+$' and `$\circ$', respectively. In case~1, the optimum rate vector lies in the time-sharing zone and the optimum covariance matrices can be obtained with either Algorithm~\ref{Alg:Algorithm} or the iterative water-filling algorithm~\cite{Wei_Yu_IWFA_MAC}. In case~2, the optimum covariance matrices are obtained with Algorithm~\ref{Alg:Algorithm}; the iterative water-filling algorithm cannot achieve these matrices in this case.

The optimum time-sharing parameters for case~1 are $\beta_1=0.57$ and $\beta_2=0.43$, and for case~2 are $\beta_1=1$ and $\beta_2=0$, i.e., for the latter case it is optimum to decode user $1$ first without time-sharing. The optimum covariance matrices for case~1 are given by
\begin{equation}
    \boldsymbol{Q}_1=\left(
    \begin{array}{cc}
        1.98 & 0.71\\
        0.71 & 8.02
    \end{array}
    \right)+\jmath\left(
    \begin{array}{cc}
        0 & -3.92\\
        3.92 & 0
    \end{array}
    \right),
\end{equation}
\begin{equation}
    \boldsymbol{Q}_2=\left(
    \begin{array}{cc}
        0.86 & 0.55\\
        0.55 & 9.14
    \end{array}
    \right)+\jmath\left(
    \begin{array}{cc}
        0 & 2.75\\
        -2.75 & 0
    \end{array}
    \right),
\end{equation}
and for case~2 are given by
\begin{equation}
    \boldsymbol{Q}_1=\left(
    \begin{array}{cc}
        1.89 & 0.64\\
        0.64 & 8.11
    \end{array}
    \right)+\jmath\left(
    \begin{array}{cc}
        0 & -3.87\\
        3.87 & 0
    \end{array}
    \right),
\end{equation}
\begin{equation}
    \boldsymbol{Q}_2=\left(
    \begin{array}{cc}
        1.39 & 0.34\\
        0.34 & 8.61
    \end{array}
    \right)+\jmath\left(
    \begin{array}{cc}
        0 & 1.83\\
        -1.83 & 0
    \end{array}
    \right).
\end{equation}

For case~1, the algorithm converged in two iterations, and for case~2 it converged in eight iterations. %The convergence behavior in that case is illustrated in Figure~\ref{fig:Capacityb}.
In both cases, the algorithm was stopped when the upper bound on the error in~\eqref{eq:theo:Convergence:01} dropped below~$10^{-5}$.
%\begin{figure}
%\includegraphics{Capacityb_OneCol}
%\caption{The upper bound in~\eqref{eq:theo:Convergence:01} on the error after each iteration ---case~2 of Example~1.}
%\label{fig:Capacityb}
%\end{figure}

\subsection{Example 2}\label{subsec:app2}
In this example, the goal is to maximize the weighted proportional fairness~\cite{Kelly}, provided that the sum rate exceeds a given threshold $R$. In this case, the optimum time-sharing vector and covariance matrices are those that solve
\begin{subequations}\label{eq:Example2:Problem}
\begin{align}
    \underset{\boldsymbol{\beta},\ccov}{\max}\quad  &v_1\log\big(\hat\rho_1(\boldsymbol{\beta},\ccov)\big)+v_2\log\big(\hat\rho_2(\boldsymbol{\beta},\ccov)\big),\label{eq:Example2:Problem:Obj}\\
    \text{subject to}\quad &\beta_1+\beta_2=1,\;\beta_i\geq0,\;i=1,2,\\
    &\boldsymbol{Q}_k\succeq0,\;\lambda\leq\text{tr}\left(\boldsymbol{Q}_k\right)\leq P,\;k=1,2,\label{eq:Example2:Problem:Const02}\\
    &\log\left|\boldsymbol{I}+\boldsymbol{H}_1\boldsymbol{Q}_1\boldsymbol{H}_1^\dag+\boldsymbol{H}_2\boldsymbol{Q}_2\boldsymbol{H}_2^\dag\right|\geq R,\label{eq:Example2:Problem:Const03}
\end{align}
\end{subequations}
where $v_1>0$ and $v_2>0$ are arbitrary weights and $\lambda>0$, as in the previous example, is a small scalar that ensures that the conditions of Theorem~\ref{theo:Convergence} are satisfied. We note that the objective in this problem, as in the previous one, is highly nonconvex in $\boldsymbol{\beta}$ and $\ccov$. However, since the objective is to maximize a concave function in the users' rates, Theorem~\ref{theo:SufNecCondGen} can be used to characterize the optimum transmission parameters. Moreover, this objective is nondecreasing, which implies that, from case I in Section~\ref{OptGMAC:Case04} and Theorem~\ref{theo:OneCovMatGen} in Section~\ref{OptGMAC:Case04b}, the optimum rate vector can be reached using one composite covariance matrix. In order to use Algorithm~\ref{Alg:Algorithm} to solve this problem, the convex constraints~\eqref{eq:Example2:Problem:Const02} and~\eqref{eq:Example2:Problem:Const03} were incorporated in the formulation of the problem in~\eqref{eq:theo:LinNecc:01}, which was then solved using the CVX package~\cite{cvx} with an underlying interior point method.

\begin{figure}
\includegraphics[width=\columnwidth]{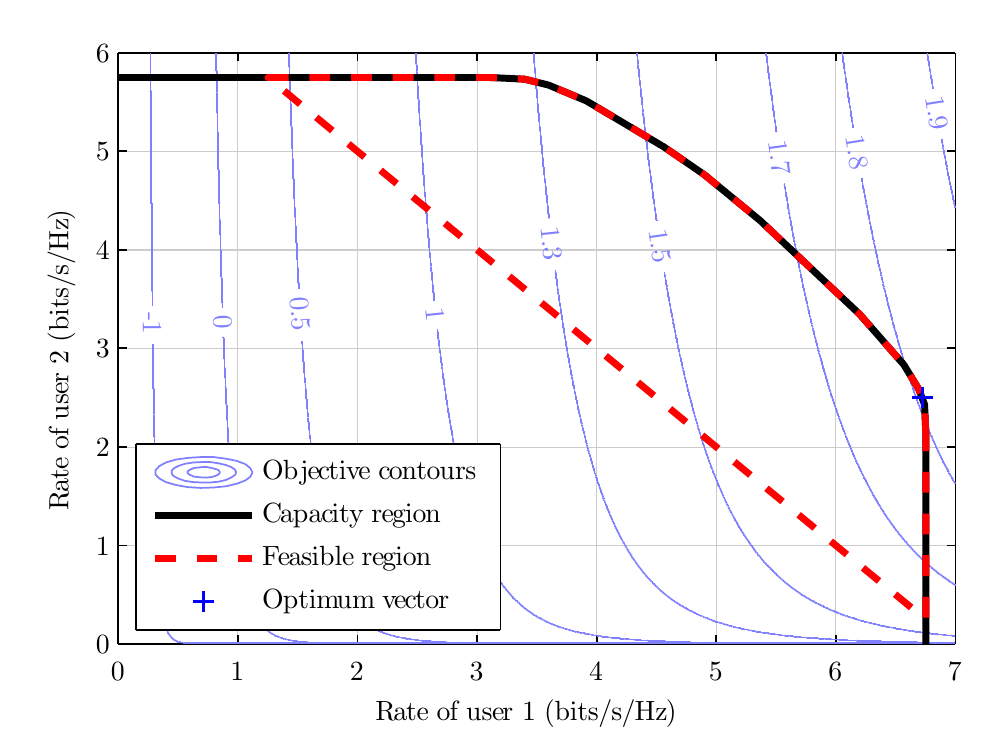}
\caption{Feasible rate region and the optimum rate vector corresponding to Example~2.}
\label{fig:Capacity2}
\end{figure}
Figure~\ref{fig:Capacity2} shows the feasible rate region of this example when $R=7$~bits/s/Hz and the optimum rate vector for $v_1=0.9$ and $v_2=0.1$. The optimum time-sharing parameters are $\beta_1=0$ and $\beta_2=1$, i.e., it is optimum to decode user $2$ first without time-sharing. The optimum covariance matrices are given by
\begin{equation}
    \boldsymbol{Q}_1=\left(
    \begin{array}{cc}
        4.46 & 0.36\\
        0.36 & 5.54
    \end{array}
    \right)+\jmath\left(
    \begin{array}{cc}
        0 & -1.13\\
        1.13 & 0
    \end{array}
    \right),
\end{equation}
\begin{equation}
    \boldsymbol{Q}_2=\left(
    \begin{array}{cc}
        0.90 & 0.56\\
        0.56 & 9.10
    \end{array}
    \right)+\jmath\left(
    \begin{array}{cc}
        0 & 2.80\\
        -2.80 & 0
    \end{array}
    \right).
\end{equation}

%Figure~\ref{fig:Capacity2b} illustrates the convergence behavior of the proposed algorithm for this example.
Similar to the previous example, the algorithm was stopped when the upper bound on the error in~\eqref{eq:theo:Convergence:01} dropped below~$10^{-5}$. In this case, four iterations sufficed for the error to drop below the threshold.
%\begin{figure}
%\includegraphics{Capacity2b_OneCol}
%\caption{The upper bound in~\eqref{eq:theo:Convergence:01} on the error after each iteration corresponding to Example~2.}
%\label{fig:Capacity2b}
%\end{figure}

\section{Conclusion}
We considered a GMAC scenario in which the goal is to minimize a general non-linear objective, provided that multiple power constraints are satisfied. The variables that underlie this optimization are the input covariance matrices of the users, their time-sharing parameters and their decoding order. As such, the considered problems fall under the category of mixed-integer optimization  problems, which are generally difficult to solve. To circumvent this difficulty, we invoked Carath\'{e}odory's theorem and variational inequalities to analyze problems with general, possibly non-convex, objectives.  This analysis enabled us  to derive: 1) necessary optimality conditions for general problems; and 2) necessary and sufficient optimality conditions for problems with objectives that are convex in the rates, but not necessarily convex in the time-sharing parameters, decoding orders and covariance matrices. Drawing insight into these conditions, we developed and analyzed the convergence of an algorithm for solving a broad class of practical, but generally nonconvex and difficult to solve, GMAC optimization problems. We suspect that, using the GMAC-GBC duality, our  results can be utilized to determine the optimal transmission parameters of  the dirty-paper coding scheme in various broadcast communication scenarios.

\section*{Acknowledgements}
The authors would like to thank Dr. Melda Yuksel, TOBB University of Economics and Technology, Turkey, for her valuable comments.

\appendices
\section{Background}
\label{preliminaries}
In this appendix, we review background material that involves variational inequalities and that is necessary for the development of some of the results of this paper.

\begin{proposition}\label{theo:TransLinNec}
Let $\opt{\boldsymbol{x}}$ be an optimum solution for
\begin{equation}\label{theo:TransLinNec:01}
\underset{\boldsymbol{x}\in\mathcal{X}}{\min}\ f(\boldsymbol{x}),
\end{equation}
where $\mathcal{X}\subset\mathbb{R}^N$ is a convex set, and $f$ is  continuously differentiable. Then, for the constant vector $\boldsymbol{a}=\nabla f(\opt{\boldsymbol{x}})$, $\opt{\boldsymbol{x}}$ solves the following optimization problem:
\begin{equation}\label{theo:TransLinNec:02}
\underset{\boldsymbol{x}\in\mathcal{X}}{\min}\ \boldsymbol{x}^\intercal\boldsymbol{a}.
\end{equation}
\end{proposition}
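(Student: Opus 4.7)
The plan is to prove this as a standard variational inequality characterization of local optima on convex sets, and then observe that the variational inequality is exactly the optimality condition for the linear program in \eqref{theo:TransLinNec:02}.

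First, I would fix an arbitrary $\boldsymbol{x}\in\mathcal{X}$ and exploit convexity of $\mathcal{X}$ by forming the convex combination $\boldsymbol{x}(t)=\opt{\boldsymbol{x}}+t(\boldsymbol{x}-\opt{\boldsymbol{x}})$, which lies in $\mathcal{X}$ for every $t\in[0,1]$. Defining $\phi(t)=f(\boldsymbol{x}(t))$, the optimality of $\opt{\boldsymbol{x}}$ on $\mathcal{X}$ implies $\phi(t)\geq\phi(0)$ for all $t\in[0,1]$. Using continuous differentiability of $f$, the chain rule then gives the one-sided derivative
\begin{equation*}
0\leq\lim_{t\downarrow0}\frac{\phi(t)-\phi(0)}{t}=\nabla f(\opt{\boldsymbol{x}})^\intercal(\boldsymbol{x}-\opt{\boldsymbol{x}})=\boldsymbol{a}^\intercal(\boldsymbol{x}-\opt{\boldsymbol{x}}).
\end{equation*}

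Since $\boldsymbol{x}\in\mathcal{X}$ was arbitrary, this yields the variational inequality $\boldsymbol{a}^\intercal\boldsymbol{x}\geq\boldsymbol{a}^\intercal\opt{\boldsymbol{x}}$ for every $\boldsymbol{x}\in\mathcal{X}$, which is exactly the statement that $\opt{\boldsymbol{x}}$ solves the linear optimization problem in \eqref{theo:TransLinNec:02}. The conclusion then follows immediately.

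The proof is essentially routine—the only delicate point is that we only obtain the \emph{right} derivative at $t=0$ (because $\boldsymbol{x}(t)$ is only guaranteed to be feasible for $t\geq0$), but this is harmless since the chain rule still produces the inequality in the correct direction. Nothing here requires $f$ to be convex; only the feasible set $\mathcal{X}$ must be convex so that the segment $\boldsymbol{x}(t)$ lies in $\mathcal{X}$. I would present the argument in three short steps: (i) feasibility of the segment, (ii) the one-sided derivative inequality from continuous differentiability, (iii) recognition that this is the optimality condition for the linear objective $\boldsymbol{x}^\intercal\boldsymbol{a}$.
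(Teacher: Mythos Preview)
Your proof is correct and follows essentially the same approach as the paper: both use the directional derivative of $f$ along the segment from $\opt{\boldsymbol{x}}$ to an arbitrary feasible point, which lies in $\mathcal{X}$ by convexity. The only cosmetic difference is that the paper argues by contrapositive (assuming a point with strictly smaller linear objective and deriving a strictly negative directional derivative, contradicting optimality), whereas you argue directly.
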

\begin{proof}
Let $\nonopt{\boldsymbol{x}}\in\mathcal{X}$ be a point such that ${\opt{\boldsymbol{x}}}^\intercal\nabla f(\opt{\boldsymbol{x}})>{\nonopt{\boldsymbol{x}}}^\intercal\nabla f(\opt{\boldsymbol{x}})$. We are going to show that $\opt{\boldsymbol{x}}$ cannot be a minimum of $f$ in $\mathcal{X}$. Taking the directional derivative of $f$ in the direction $\boldsymbol{v}=\nonopt{\boldsymbol{x}}-\opt{\boldsymbol{x}}$, we have
\begin{equation}
\begin{split}
\nabla_{\boldsymbol{v}}f(\opt{\boldsymbol{x}})&=\lim_{a\rightarrow0^+}\frac{f(\opt{\boldsymbol{x}}+a\boldsymbol{v})-f(\opt{\boldsymbol{x}})}{a|\boldsymbol{v}|}\\
&=\frac{\left(\nonopt{\boldsymbol{x}}-\opt{\boldsymbol{x}}\right)^\intercal}{|\boldsymbol{v}|}\nabla f(\opt{\boldsymbol{x}})<0,
\end{split}
\end{equation}
where the strict inequality follows from the assumption that ${\opt{\boldsymbol{x}}}^\intercal\nabla f(\opt{\boldsymbol{x}})>{\nonopt{\boldsymbol{x}}}^\intercal\nabla f(\opt{\boldsymbol{x}})$. Since $\mathcal{X}$ is convex, $a\nonopt{\boldsymbol{x}}+(1-a)\opt{\boldsymbol{x}}\in\mathcal{X}$, for all $0\leq a\leq1$. Thus, $\opt{\boldsymbol{x}}+a\boldsymbol{v}\in\mathcal{X}$, for all $0\leq a\leq1$. The fact that the previous derivative is strictly negative implies that, for a sufficiently small $a$, the numerator of the limit must be negative. Thus, $f(\opt{\boldsymbol{x}}+a\boldsymbol{v})<f(\opt{\boldsymbol{x}})$, for small $a>0$. Consequently, $\opt{\boldsymbol{x}}$ cannot be a minimum of $f$ in $\mathcal{X}$.
\end{proof}

Proposition~\ref{theo:TransLinNec} describes a necessary condition that is satisfied by all optimum solutions of~\eqref{theo:TransLinNec:01}. In particular, Proposition~\ref{theo:TransLinNec} identifies all stationary points which are guaranteed to be local minima except for those stationary points with null gradients. The latter points can be also local maxima or saddle points that lie within the feasible set $\mathcal{X}$. As such, the global minima constitute a subset of the stationary points identified by Proposition~\ref{theo:TransLinNec}. Complementary to Proposition~\ref{theo:TransLinNec} is the following result.

\begin{proposition}\label{theo:TransLinSuf}
Let $\opt{\boldsymbol{x}}\in\mathcal{X}\subset\mathbb{R}^N$, and let $\boldsymbol{a}=\nabla f(\opt{\boldsymbol{x}})$, where $f$ is convex and continuously differentiable. If $\opt{\boldsymbol{x}}\in\arg\underset{\boldsymbol{x}\in\mathcal{X}}{\min}\ \boldsymbol{x}^\intercal\boldsymbol{a}$, then $\opt{\boldsymbol{x}}$ solves the following optimization problem:
\begin{equation}
\underset{\boldsymbol{x}\in\mathcal{X}}{\min}\ f(\boldsymbol{x}).
\end{equation}
\end{proposition}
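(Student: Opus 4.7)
The plan is to exploit the first-order characterization of convexity for $f$ to convert the linear (variational) optimality of $\opt{\boldsymbol{x}}$ into full optimality for $f$. Specifically, since $f$ is convex and continuously differentiable on $\mathbb{R}^N$, the standard gradient inequality gives
\begin{equation*}
f(\boldsymbol{x})\geq f(\opt{\boldsymbol{x}})+\nabla f(\opt{\boldsymbol{x}})^\intercal(\boldsymbol{x}-\opt{\boldsymbol{x}})\qquad\text{for all }\boldsymbol{x}\in\mathbb{R}^N.
\end{equation*}
This is the only structural property of $f$ that I will use beyond what is stated in the hypothesis.

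Next I would invoke the variational hypothesis. By assumption $\opt{\boldsymbol{x}}\in\arg\min_{\boldsymbol{x}\in\mathcal{X}}\boldsymbol{x}^\intercal\boldsymbol{a}$ with $\boldsymbol{a}=\nabla f(\opt{\boldsymbol{x}})$, so for every $\boldsymbol{x}\in\mathcal{X}$,
\begin{equation*}
\nabla f(\opt{\boldsymbol{x}})^\intercal(\boldsymbol{x}-\opt{\boldsymbol{x}})=\boldsymbol{x}^\intercal\boldsymbol{a}-{\opt{\boldsymbol{x}}}^\intercal\boldsymbol{a}\geq 0.
\end{equation*}
Substituting this nonnegativity into the gradient inequality evaluated at an arbitrary $\boldsymbol{x}\in\mathcal{X}$ yields $f(\boldsymbol{x})\geq f(\opt{\boldsymbol{x}})$, which is exactly the claim.

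There is no real obstacle here: the proof is essentially a one-line combination of the convexity inequality with the definition of the linear minimum. It is worth noting, however, that the argument does \emph{not} require $\mathcal{X}$ to be convex, in contrast to Proposition~\ref{theo:TransLinNec}, whose proof relies on taking feasible convex combinations of $\opt{\boldsymbol{x}}$ and a competitor point. The convexity of $f$ alone supplies the global bound, while the linear variational condition supplies the correct sign of the directional term; together they pin $\opt{\boldsymbol{x}}$ as a global minimizer of $f$ on $\mathcal{X}$.
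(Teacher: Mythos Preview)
Your proof is correct and follows essentially the same approach as the paper: both use the first-order convexity inequality $f(\boldsymbol{x})\geq f(\opt{\boldsymbol{x}})+\nabla f(\opt{\boldsymbol{x}})^\intercal(\boldsymbol{x}-\opt{\boldsymbol{x}})$ together with the linear-minimization hypothesis to conclude $f(\boldsymbol{x})\geq f(\opt{\boldsymbol{x}})$. Your remark that convexity of $\mathcal{X}$ is not needed here is also consistent with the paper's own observation distinguishing this result from Proposition~\ref{theo:TransLinNec}.
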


\begin{proof}
Let $\opt{\boldsymbol{x}}\in\mathcal{X}$ be a point such that ${\opt{\boldsymbol{x}}}^\intercal\nabla f(\opt{\boldsymbol{x}})\leq\boldsymbol{x}^\intercal\nabla f(\opt{\boldsymbol{x}})$ for all $\boldsymbol{x}\in\mathcal{X}$. Then, $0\leq\left(\boldsymbol{x}-\opt{\boldsymbol{x}}\right)^\intercal\nabla f(\opt{\boldsymbol{x}})$. Since $f$ is convex, then, from the first order convexity condition~\cite[Proposition~B.3]{Bertsekas}, $f(\boldsymbol{x})-f(\opt{\boldsymbol{x}})\geq\left(\boldsymbol{x}-\opt{\boldsymbol{x}}\right)^\intercal\nabla f(\opt{\boldsymbol{x}})$. Therefore, $f(\boldsymbol{x})-f(\opt{\boldsymbol{x}})\geq0$, and, hence, $f(\boldsymbol{x})\geq f(\opt{\boldsymbol{x}})$.
\end{proof}

The results of Propositions~\ref{theo:TransLinNec} and~\ref{theo:TransLinSuf} are usually presented conjointly. However, for the purpose of this paper, it is essential to make the distinction between these results. In particular, the necessary optimality condition in Proposition~\ref{theo:TransLinNec} follows from the convexity of the feasible set, and the sufficient optimality condition in Proposition~\ref{theo:TransLinSuf} follows from the convexity of the objective function.

\section{Proof of Proposition~\ref{theo:GMACLinearNS}}
\label{App:Proof:GMACLinear}
\label{App:Proof:GMACLinearNS}

The proof of this proposition is composed of two steps. In the first step, we will show the optimality  of the orderings that follow the increasing order of the weights. In the second step, we will show that, if another ordering is also optimum, then the rates achieved by this decoding order are equal to the rates achieved by an order that follows the increasing order of the weights.

\paragraph*{Step 1}
Let $i$ and $\opt{i}$ be two identical decoding orderings except for positions $k$ and $k+1$, which are reversed. Without loss of generality, let the users be labeled following the ordering $\opt{i}$. Then, in the ordering $i$, user $k$ is decoded right after user $k+1$, and in the ordering $\opt{i}$ user $k$ is decoded right before user $k+1$; and we have that
\begin{equation}\label{eq:theo:GMACLinear:02}
k\!=\!\pi_{\opt{i}}^{-1}(k)\!=\!\pi_{\opt{i}}^{-1}(k+1)-1\!=\!\pi_{i}^{-1}(k)-1\!=\!\pi_{i}^{-1}(k+1).
\end{equation}
Suppose that $w_k\leq w_{k+1}$. We will show that the $\opt{i}$-th ordering is always better than the $i$-th ordering, which implies that users $k$ and $k+1$ should be decoded following the increasing order of their weights.

Let $\cccov=\{\ccov[(m)]\}_{m=1}^{K+1}$ and $m\in\{1,\dots,K+1\}$ be given. Then, from~\eqref{eq:r_ki} and~\eqref{eq:theo:GMACLinear:02}, the sum-rate of users $k$ and $k+1$ in the $\opt{i}$-th ordering is
\begin{equation}\label{eq:theo:GMACLinear:03b}
\begin{split}
r_{k\opt{i}}&\big(\ccov[(m)]\big)+r_{(k+1)\opt{i}}\big(\ccov[(m)]\big)=\\
&\log\frac{\left|\boldsymbol{I}\!+\!\sum_{j\geq\pi_{\opt{i}}^{-1}(k)}\boldsymbol{H}_{\pi_{\opt{i}}(j)}\ucov[(m)][\pi_{\opt{i}}(j)]\boldsymbol{H}_{\pi_{\opt{i}}(j)}^\dagger\right|}{\left|\boldsymbol{I}\!+\!\sum_{j>\pi_{\opt{i}}^{-1}(k)}\boldsymbol{H}_{\pi_{\opt{i}}(j)}\ucov[(m)][\pi_{\opt{i}}(j)]\boldsymbol{H}_{\pi_{\opt{i}}(j)}^\dagger\right|}+\\
&\log\frac{\left|\boldsymbol{I}\!+\!\sum_{j\geq\pi_{\opt{i}}^{-1}(k+1)}\boldsymbol{H}_{\pi_{\opt{i}}(j)}\ucov[(m)][\pi_{\opt{i}}(j)]\boldsymbol{H}_{\pi_{\opt{i}}(j)}^\dagger\right|}{\left|\boldsymbol{I}\!+\!\sum_{j>\pi_{\opt{i}}^{-1}(k+1)}\boldsymbol{H}_{\pi_{\opt{i}}(j)}\ucov[(m)][\pi_{\opt{i}}(j)]\boldsymbol{H}_{\pi_{\opt{i}}(j)}^\dagger\right|}=\\
&\log\frac{\left|\boldsymbol{I}\!+\!\sum_{j\geq k}\boldsymbol{H}_{\pi_{\opt{i}}(j)}\ucov[(m)][\pi_{\opt{i}}(j)]\boldsymbol{H}_{\pi_{\opt{i}}(j)}^\dagger\right|}{\left|\boldsymbol{I}\!+\!\sum_{j>k}\boldsymbol{H}_{\pi_{\opt{i}}(j)}\ucov[(m)][\pi_{\opt{i}}(j)]\boldsymbol{H}_{\pi_{\opt{i}}(j)}^\dagger\right|}+\\
&\log\frac{\left|\boldsymbol{I}\!+\!\sum_{j>k}\boldsymbol{H}_{\pi_{\opt{i}}(j)}\ucov[(m)][\pi_{\opt{i}}(j)]\boldsymbol{H}_{\pi_{\opt{i}}(j)}^\dagger\right|}{\left|\boldsymbol{I}\!+\!\sum_{j>k+1}\boldsymbol{H}_{\pi_{\opt{i}}(j)}\ucov[(m)][\pi_{\opt{i}}(j)]\boldsymbol{H}_{\pi_{\opt{i}}(j)}^\dagger\right|}=\\
&\log\frac{\left|\boldsymbol{I}\!+\!\sum_{j\geq k}\boldsymbol{H}_{\pi_{\opt{i}}(j)}\ucov[(m)][\pi_{\opt{i}}(j)]\boldsymbol{H}_{\pi_{\opt{i}}(j)}^\dagger\right|}{\left|\boldsymbol{I}\!+\!\sum_{j>k+1}\boldsymbol{H}_{\pi_{\opt{i}}(j)}\ucov[(m)][\pi_{\opt{i}}(j)]\boldsymbol{H}_{\pi_{\opt{i}}(j)}^\dagger\right|}.
\end{split}
\end{equation}
Similarly, for the $i$-th ordering we have
\begin{multline}\label{eq:theo:GMACLinear:03c}
r_{ki}\big(\ccov[(m)]\big)+r_{(k+1)i}\big(\ccov[(m)]\big)=\\
\log\frac{\left|\boldsymbol{I}\!+\!\sum_{j\geq k}\boldsymbol{H}_{\pi_{\opt{i}}(j)}\ucov[(m)][\pi_{\opt{i}}(j)]\boldsymbol{H}_{\pi_{\opt{i}}(j)}^\dagger\right|}{\left|\boldsymbol{I}\!+\!\sum_{j>k+1}\boldsymbol{H}_{\pi_{\opt{i}}(j)}\ucov[(m)][\pi_{\opt{i}}(j)]\boldsymbol{H}_{\pi_{\opt{i}}(j)}^\dagger\right|}.
\end{multline}
Hence, from~\eqref{eq:theo:GMACLinear:03b} and~\eqref{eq:theo:GMACLinear:03c}, we have that,
\begin{multline}\label{eq:theo:GMACLinear:04}
r_{ki}\big(\ccov[(m)]\big)+r_{(k+1)i}\big(\ccov[(m)]\big)=\\
r_{k\opt{i}}\big(\ccov[(m)]\big)+r_{(k+1)\opt{i}}\big(\ccov[(m)]\big).
\end{multline}

Since the $k$-th user sees the signal of user $k+1$ as interference in the $\opt{i}$-th ordering, but not in the $i$-th ordering, then, it can be seen from~\eqref{eq:r_ki} and the relation~\cite{BC_Capacity_Region_J}
\begin{equation}\label{eq:theo:OneCovMatFixed_Prop:ineq}
\frac{|\boldsymbol{I}+\boldsymbol{A}+\boldsymbol{B}+\boldsymbol{C}|}{|\boldsymbol{I}+\boldsymbol{A}+\boldsymbol{B}|}\leq\frac{|\boldsymbol{I}+\boldsymbol{A}+\boldsymbol{C}|}{|\boldsymbol{I}+\boldsymbol{A}|},
\end{equation}
which holds for all positive semidefinite matrices $\boldsymbol{A}$, $\boldsymbol{B}$, and $\boldsymbol{C}$, that
\begin{equation}\label{eq:theo:GMACLinear:04b}
r_{ki}\big(\ccov[(m)]\big)\geq r_{k\opt{i}}\big(\ccov[(m)]\big).
\end{equation}
We now multiply~\eqref{eq:theo:GMACLinear:04} with $w_{k+1}$, and multiply~\eqref{eq:theo:GMACLinear:04b} with $w_k-w_{k+1}$. Adding the left hand sides and the right hand sides of the resulting inequalities, yields
\begin{multline}\label{eq:theo:GMACLinear:06}
w_kr_{ki}\big(\ccov[(m)]\big)+w_{k+1}r_{(k+1)i}\big(\ccov[(m)]\big)\leq\\
w_kr_{k\opt{i}}\big(\ccov[(m)]\big)+w_{k+1}r_{(k+1)\opt{i}}\big(\ccov[(m)]\big),
\end{multline}
which shows that the $\opt{i}$-th ordering yields a higher objective than its $i$-th counterpart.

\paragraph*{Step 2}
Let $w_k$ be strictly less than $w_{k+1}$, i.e., $w_k<w_{k+1}$. Then, using the decoding order $i$, users $k$ and $k+1$ are decoded following the decreasing order of their weights. From~\eqref{eq:theo:GMACLinear:06},
it can be seen that the $i$-th ordering yields an objective that is at most as great as that of the $\opt{i}$-th ordering. In this case, \eqref{eq:theo:GMACLinear:06} is satisfied with equality, and, subsequently, \eqref{eq:theo:GMACLinear:04b} is also satisfied with equality, i.e., $r_{ki}\big(\ccov[(m)]\big)=r_{k\opt{i}}\big(\ccov[(m)]\big)$.
Moreover, from~\eqref{eq:theo:GMACLinear:04}, the sum of the rates of both users is the same in the two orderings. Therefore, $r_{(k+1)i}\big(\ccov[(m)]\big)=r_{(k+1)\opt{i}}\big(\ccov[(m)]\big)$.
As a consequence, the rates must be the same in orderings $\opt{i}$ and $i$.

\section{Proof of Proposition~\ref{theo:OneCovMatLinObjA}}
\label{App:Proof:OneCovMatLinObjA}
The proof begins by assuming that an optimum pair $(\opt{\boldsymbol{\alpha}},\cccovo)$ uses two or more composite covariance matrices, and proceeds to show that one of these matrices suffices to maximize the linear rate objective.

Let $\opt{\boldsymbol{\alpha}}$ and $\cccovo$ be optimum for the problem in~\eqref{eq:MACPLinear}. Assume that $\opt{\alpha}_{mi}=0$ for all $m$ and all $i\neq\opt{i}$, that is to say, only the $\opt{i}$-th ordering is used in $\opt{\boldsymbol{\alpha}}$.
In this case, the value of the objective in~\eqref{eq:MACPLinear} for this $\opt{\boldsymbol{\alpha}}$ and $\cccovo$ is given by
\begin{equation}\label{eq:theo:OneCovMatLinObjA:01}
\sum_{k=1}^Kw_k\rho_k(\opt{\boldsymbol{\alpha}},\cccovo)=\sum_{k=1}^K\sum_{m=1}^{K+1}w_k\opt{\alpha}_{m\opt{i}}r_{k\opt{i}}\big(\ccovo[(m)]\big).
\end{equation}

Suppose that the optimum is achieved with at least two composite covariance matrices, and let $\mathcal{M}\subseteq\{1,\dots,K+1\}$ be the set of indices of all the used composite covariance matrices. By definition of the time-sharing matrix $\boldsymbol{\alpha}$ (cf. Section~\ref{model}), $m\in\mathcal{M}$ if and only if $\opt{\alpha}_{m\opt{i}}>0$. If, for some $m,n\in\mathcal{M}$, $\sum_{k=1}^Kw_kr_{k\opt{i}}\big(\ccovo[(m)]\big)>\sum_{k=1}^Kw_kr_{k\opt{i}}\big(\ccovo[(n)]\big)$, then, time-sharing of $\ccovo[(m)]$ and $\ccovo[(n)]$ cannot be optimum, since it would be better to use $\ccovo[(m)]$ alone, and not with $\ccovo[(n)]$. Therefore, for all $m,n\in\mathcal{M}$, we must have
$\sum_{k=1}^Kw_kr_{k\opt{i}}\big(\ccovo[(m)]\big)=\sum_{k=1}^Kw_kr_{k\opt{i}}\big(\ccovo[(n)]\big)=C$, where $C$ is a constant. In this case, the equality in~\eqref{eq:theo:OneCovMatLinObjA:01} can be expressed as
\begin{equation}\label{eq:theo:OneCovMatLinObjA:02}
\sum_{k=1}^Kw_k\rho_k(\opt{\boldsymbol{\alpha}},\cccovo)=\sum_{m=1}^{K+1}\opt{\alpha}_{m\opt{i}}C=C,
\end{equation}
where the last equality follows from the fact that $\sum_{m=1}^{K+1}\opt{\alpha}_{m\opt{i}}=1$. Note that~\eqref{eq:theo:OneCovMatLinObjA:02} is independent of the values of $\opt{\alpha}_{m\opt{i}}$, $m=1,\dots,K+1$. Hence, the value of the objective is independent of the time-sharing of the composite covariance matrices with indices in $\mathcal{M}$. In other words, we can choose $\opt{\alpha}_{m\opt{i}}=1$ for some $m\in\mathcal{M}$, and $\opt{\alpha}_{n\opt{i}}=0$ for every $n\neq m$, which implies that one composite covariance matrix suffices to achieve the maximum of the linear rate objective.

\section{Proof of Theorem~\ref{theo:LinNecc}}
\label{App:Proof:theo:LinNecc}

\subsection{Proof of the direct part}
In this section we will show that if the pair $(\opt{\boldsymbol{\alpha}},\cccovo)$ solves~\eqref{eq:MACP}, then it satisfies the two conditions of this theorem. To do so, we will synthesize an auxiliary optimization problem with a linear rate objective such that a solution for the problem in~\eqref{eq:MACP} is also a solution of the auxiliary problem. The two conditions of this theorem are then obtained by invoking the results of Section~\ref{OptGMAC:Case02}.

The convexity of the set of all achievable rates, $\mathcal{C}$, implies that Proposition~\ref{theo:TransLinNec} can be applied to solve the following optimization problem:
\begin{equation}\label{eq:theo:LinNecc:02}
\underset{\boldsymbol{x}\in\mathcal{C}}{\min}\ f(\boldsymbol{x}).
\end{equation}
Since $\opt{\boldsymbol{\alpha}}$ and $\cccovo$ are optimum, it follows that
$\opt{\boldsymbol{x}}=\boldsymbol{\rho}\left(\opt{\boldsymbol{\alpha}},\cccovo\right)$
solves~\eqref{eq:theo:LinNecc:02}. However, from Proposition~\ref{theo:TransLinNec}, $\opt{\boldsymbol{x}}\in\arg\max_{\boldsymbol{x}\in\mathcal{C}}\ \sum_{k=1}^Kw_kx_k$, where the weights are defined as in~\eqref{eq:theo:LinNecc:00}. Hence, $\opt{\boldsymbol{\alpha}}$ and $\cccovo$ solve
\begin{equation}\label{eq:theo:LinNecc:03}
    \underset{\boldsymbol{\alpha}\in\mathcal{S}_\alpha,\;\ccov[(m)]\in\feascov,\;m=1,\dots,K+1}{\max}\quad\sum_{k=1}^Kw_k\rho_k\left(\boldsymbol{\alpha},\cccov\right).
\end{equation}
Since the objective of this problem is linear in the rates, the optimum orderings must satisfy Proposition~\ref{theo:GMACLinearNS} in Section~\ref{OptGMAC:Case02}. This yields the first condition of the theorem.

To prove the second condition, we begin by obtaining an expression for the time-sharing matrix that solves~\eqref{eq:theo:LinNecc:03} for an arbitrary collection of composite covariance matrices. Substituting this expression in~\eqref{eq:theo:LinNecc:03} yields a problem in which the only optimization variables are the covariance matrices. Finally, we will show that the resulting problem is equivalent to the one in~\eqref{eq:theo:LinNecc:01}.

To determine the optimum time-sharing matrix, we assume that the covariance matrices are fixed. In this case, the problem in~\eqref{eq:theo:LinNecc:03} becomes a linear program in the time-sharing matrix, which, using the definition of $\rho_k(\boldsymbol{\alpha},\cccov)$ in~\eqref{eq:rate_k}, can be cast as
\begin{equation}\label{eq:theo:LinNecc:04}
    \underset{\boldsymbol{\alpha}\in\mathcal{S}_\alpha}{\max}\quad\sum_{k=1}^K\sum_{m=1}^{K+1}\sum_{i=1}^{K!}\alpha_{mi}w_kr_{ki}\big(\ccov[(m)]\big).
\end{equation}
It is straightforward to verify that the solution of this problem satisfies
\begin{subequations}\label{eq:theo:LinNecc:05}
\begin{align}
\alpha_{mi}=0,&\text{ if }\sum_{k=1}^Kw_kr_{ki}\big(\ccov[(m)]\big)<B,\text{ and}\label{eq:theo:LinNecc:05:01}\\
\alpha_{mi}\geq0,&\text{ if }\sum_{k=1}^Kw_kr_{ki}\big(\ccov[(m)]\big)=B,\label{eq:theo:LinNecc:05:02}%
\end{align}
\end{subequations}
where $B=\max_{mi}\sum_{k=1}^Kw_kr_{ki}\big(\ccov[(m)]\big)$.

Substituting for the set $\{\alpha_{mi}\}$ from~\eqref{eq:theo:LinNecc:05} into~\eqref{eq:theo:LinNecc:03}, we can write an equivalent problem that does not involve the time-sharing matrix. In particular, since $\boldsymbol{\alpha}$ satisfies~\eqref{eq:theo:LinNecc:05}, the objective in~\eqref{eq:theo:LinNecc:03} can be expressed as $\sum_{k=1}^Kw_k\rho_k\left(\boldsymbol{\alpha},\cccov\right)=\sum_{m=1}^{K+1}\sum_{i=1}^{K!}\alpha_{mi}B=B$, where the second equality follows from the fact that $\sum_{m=1}^{K+1}\sum_{i=1}^K\alpha_{mi}=1$. Furthermore, since $\cccovo$ is optimum for~\eqref{eq:theo:LinNecc:03}, it is also optimum for
\begin{equation}\label{eq:theo:LinNecc:06}
    \underset{\ccov[(m)]\in\feascov,\;m=1,\dots,K+1}{\max}\quad\max_{mi}\sum_{k=1}^Kw_kr_{ki}\big(\ccov[(m)]\big).
\end{equation}
Now, since $\opt{\boldsymbol{\alpha}}$ satisfies~\eqref{eq:theo:LinNecc:05}, the pairs $\{(m,i)\}$ that maximize~\eqref{eq:theo:LinNecc:06}\footnote{The optimum solution of~\eqref{eq:MACP} might use time-sharing between different composite covariance matrices and decoding orders. In that case, multiple pairs $\{(m,i)\}$ maximize~\eqref{eq:theo:LinNecc:06}.} are those for which $\opt{\alpha}_{mi}>0$. In that case, $\ccovo[(m)]$ is optimum for
\begin{equation}\label{eq:theo:LinNecc:07}
    \underset{\ccov[]\in\feascov}{\max}\quad\sum_{k=1}^Kw_kr_{ki}\big(\ccov[]\big).
\end{equation}

The problem in~\eqref{eq:theo:LinNecc:07} does not contain the time-sharing matrix. Moreover, as we have shown, the $m$-th optimum composite covariance matrix of~\eqref{eq:theo:LinNecc:03}, $\ccovo[(m)]$, is also optimum for~\eqref{eq:theo:LinNecc:07} for the ordering satisfying $\opt{\alpha}_{mi}>0$. Our next step is to show that the problem in~\eqref{eq:theo:LinNecc:07} is equivalent to that in~\eqref{eq:theo:LinNecc:01}.

Let $\opt{i}$ be the increasing order of the weights and let $w_1\leq\cdots\leq w_K$. From Proposition~\ref{theo:GMACLinearNS}, it can be seen that $\opt{i}$ is optimum. However, if $\opt{\alpha}_{mi}>0$, the $i$-th ordering is also optimum. In that case, the optimization problem in~\eqref{eq:theo:LinNecc:07} is equivalent to
%\begin{equation*}\label{eq:theo:LinNecc:07b}
$\max_{\ccov[]\in\feascov}\ \sum_{k=1}^Kw_kr_{k\opt{i}}\big(\ccov[]\big)$,
%\end{equation*}
which, using~\eqref{eq:prop:02}, is equivalent to $\max_{\ccov[]\in\feascov}\ \sum_{k=1}^K(w_k-w_{k-1})\log\big|\boldsymbol{I}\!+\!\sum_{j\geq
k}\boldsymbol{H}_j\ucov[][j]\boldsymbol{H}_j^\dagger\big|$, which is identical to~\eqref{eq:theo:LinNecc:01}.

\subsection{Proof of the converse}
In this section we will show that if the function $f$ is convex in the rates and the pair $(\opt{\boldsymbol{\alpha}},\cccovo)$ satisfies the two conditions of this theorem, then it solves~\eqref{eq:MACP}. To do so, we will proceed in two steps. First, we will show that a pair $(\opt{\boldsymbol{\alpha}},\cccovo)$ that satisfies the two conditions of this theorem solves~\eqref{eq:theo:LinNecc:03}. Second, we will use the convexity of $f$ and Proposition~\ref{theo:TransLinSuf} to show that a pair that solves~\eqref{eq:theo:LinNecc:03} must solve~\eqref{eq:MACP}.

\paragraph*{Step 1}
Let $\opt{\boldsymbol{\alpha}}$ and $\cccovo$ satisfy the two conditions of this theorem, and let the $mi$-th element of $\opt{\boldsymbol{\alpha}}$ be strictly positive, that is, $\opt{\alpha}_{mi}>0$. Then, from the conditions of this theorem, the \mbox{$i$-th} ordering satisfies Proposition~\ref{theo:GMACLinearNS}, and $\ccovo[(m)]$ is optimum for (\ref{eq:theo:LinNecc:01}). Recalling the equivalence of the optimization problems in~\eqref{eq:theo:LinNecc:07} and~\eqref{eq:theo:LinNecc:01}, and the optimality of the $i$-th ordering for the weights $\{w_k\}_{k=1}^K$, we have that, for any feasible collection of composite covariance matrices $\cccovno=\{\ccovno[(n)]\}_{n=1}^{K+1}$,
\begin{equation}\label{eq:theo:LinSuff:04a}
\sum_{k=1}^Kw_kr_{ki}\big(\ccovo[(m)]\big)\!\geq\!\sum_{k=1}^Kw_kr_{ki}\big(\ccovno[(n)]\big)\!\geq\!\sum_{k=1}^Kw_kr_{kj}\big(\ccovno[(n)]\big),
\end{equation}
for any composite covariance matrix $n\in\{1,\dots,K+1\}$, and ordering $j\in\{1,\dots,K!\}$. Let $\nonopt{\boldsymbol{\alpha}}$ be any feasible time-sharing matrix satisfying $\sum_{n=1}^{K+1}\sum_{j=1}^{K!}\nonopt{\alpha}_{nj}=1$. Then, the inequality in~\eqref{eq:theo:LinSuff:04a} can be written as
\begin{equation}\label{eq:theo:LinSuff:05}
\begin{split}
    \sum_{k=1}^Kw_kr_{ki}\big(\ccovo[(m)]\big)&\geq\sum_{k=1}^K\sum_{n=1}^{K+1}\sum_{j=1}^{K!}w_k\nonopt{\alpha}_{nj}r_{kj}\big(\ccovno[(n)]\big)\\
    &=\sum_{k=1}^Kw_k\rho_k\big(\nonopt{\boldsymbol{\alpha}},\cccovno\big),
\end{split}
\end{equation}
where the equality follows from~\eqref{eq:rate_k}. Since~\eqref{eq:theo:LinSuff:05} holds for any feasible $\nonopt{\boldsymbol{\alpha}}$ and $\cccovno$, it follows that $\sum_{k=1}^Kw_kr_{ki}\big(\ccovo[(m)]\big)$
is an upper bound on the objective of~\eqref{eq:theo:LinNecc:03}. Since this holds for all pairs $(m,i)$ that satisfy $\opt{\alpha}_{mi}>0$, then $\opt{\boldsymbol{\alpha}}$ and $\cccovo$ solve~\eqref{eq:theo:LinNecc:03}.

\paragraph*{Step 2}
It remains to show that a pair $(\opt{\boldsymbol{\alpha}},\cccovo)$ that solves~\eqref{eq:theo:LinNecc:03} must solve~\eqref{eq:MACP}. The optimality of $(\opt{\boldsymbol{\alpha}},\cccovo)$ for the problem in~\eqref{eq:theo:LinNecc:03} implies that $\opt{\boldsymbol{x}}=\boldsymbol{\rho}\left(\opt{\boldsymbol{\alpha}},\cccovo\right)$ is optimum for $\max_{\boldsymbol{x}\in\mathcal{C}}\ \sum_{k=1}^Kw_kx_k$.
From~\eqref{eq:theo:LinNecc:00} and Proposition~\ref{theo:TransLinSuf}, $\opt{\boldsymbol{x}}$ is also optimum for $\min_{\boldsymbol{x}\in\mathcal{C}}\ f(\boldsymbol{x})$, which implies that $\opt{\boldsymbol{\alpha}}$ and $\cccovo$ are optimum for~\eqref{eq:MACP}.

\section{Proof of Proposition~\ref{theo:Alg:Stepsize}}
\label{App:Proof:Alg:Stepsize}
From~\eqref{eq:Alg:cov} and~\eqref{eq:Alg:h}, the inequality~\eqref{eq:theo:Alg:f_decrease} can be expressed as $h(\varepsilon(t),t)\leq h(0,t)$. Using~\eqref{eq:Alg:epsilon}, the proposition is proved by showing that $\frac{\partial h(\delta,t)}{\partial\delta}\big|_{\delta=0}\leq0$. Towards that end, we define the function
\begin{equation}\label{eq:theo:Convergence:02}
\hat{h}(\delta,t)=\sum_{k=1}^K\!w_k(t)\hat\rho_k\big(\opt{\boldsymbol{\beta}}(t-1),\delta\ccovno(t)+(1-\delta)\ccovo(t-1)\big).
\end{equation}
Now,
\begin{multline}\label{eq:theo:Convergence:06b}
\frac{\partial \hat{h}(\delta,t)}{\partial\delta}\bigg|_{\delta=0}=
\sum_{k=1}^Kw_k(t)\\
\frac{\partial\hat\rho_k\big(\opt{\boldsymbol{\beta}}(t-1),\delta\ccovno(t)+(1-\delta)\ccovo(t-1)\big)}{\partial\delta}\bigg|_{\delta=0},
\end{multline}
and
\begin{multline}\label{eq:theo:Convergence:06c}
\frac{\partial h(\delta,t)}{\partial\delta}\bigg|_{\delta=0}=
\sum_{k=1}^K\frac{\partial f(\boldsymbol{x})}{\partial x_k}\bigg|_{\boldsymbol{x}=\hat{\boldsymbol{\rho}}\left(\opt{\boldsymbol{\beta}}(t-1),\ccovo(t-1)\right)}\\
\frac{\partial\hat\rho_k\big(\opt{\boldsymbol{\beta}}(t-1),\delta\ccovno(t)+(1-\delta)\ccovo(t-1)\big)}{\partial\delta}\bigg|_{\delta=0}.
\end{multline}
Hence, recalling that (cf. Algorithm~\ref{Alg:Algorithm})
\begin{equation}\label{eq:theo:Convergence:06d}
    w_k(t)=-\frac{\partial f(\boldsymbol{x})}{\partial x_k}\bigg|_{\boldsymbol{x}=\hat{\boldsymbol{\rho}}\left(\opt{\boldsymbol{\beta}}(t-1),\ccovo(t-1)\right)},\ k=1,\dots,K,
\end{equation}
we conclude that
\begin{equation}\label{eq:theo:Convergence:07}
\frac{\partial h(\delta,t)}{\partial\delta}\Big|_{\delta=0}=-\frac{\partial \hat{h}(\delta,t)}{\partial\delta}\Big|_{\delta=0}.
\end{equation}

To show that $\frac{\partial \hat h(\delta,t)}{\partial\delta}\big|_{\delta=0}\geq0$, we recall that, in the sixth step of Algorithm~\ref{Alg:Algorithm}, $\opt{\boldsymbol{\beta}}(t-1)$ solves ${\min}_{\boldsymbol{\beta}\in\mathcal{S}_\beta}\ f\big(\hat{\boldsymbol{\rho}}\big(\boldsymbol{\beta},\ccovo[](t-1)\big)\big)$, which is a special case of the problem in~\eqref{eq:MACP2}. As such, any ordering $i$, such that $\opt{\beta}_i(t-1)>0$, is optimum for the linear objective defined by the weights $\{w_k(t)\}_{k=1}^K$, which implies that the weighted sum rate for all these orderings is equal to $\hat h(\delta,t)$, i.e.,
\begin{equation}\label{eq:theo:Convergence:02b}
\sum_{k=1}^Kw_k(t)r_{ki}\big(\delta\ccovno(t)+(1-\delta)\ccovo(t-1)\big)=\hat h(\delta,t),
\end{equation}
for all $i$ such that $\opt{\beta}_i(t-1)>0$, where the last equality follows from~\eqref{eq:theo:Convergence:02}, the definition of $\hat{\rho}_k(\boldsymbol{\beta},\ccov)$ in~\eqref{eq:RVC}, and the fact that $\sum_{i=1}^{K!}\opt{\beta}_i(t-1)=1$. Assuming that $w_1(t)\leq\cdots\leq w_K(t)$ and using~\eqref{eq:theo:Convergence:02b} and~\eqref{eq:prop:02}, $\hat h(\delta,t)$ can be expressed as
\begin{multline}\label{eq:theo:Convergence:03}
\hat h(\delta,t)=\sum_{k=1}^K\big(w_k(t)-w_{k-1}(t)\big)\\
\log\Bigl|\boldsymbol{I}+\sum_{j\geq k}\boldsymbol{H}_j\big(\delta\ucovno[][j](t)+(1-\delta)\ucovo[][j](t-1)\big)\boldsymbol{H}_j^\dagger\Bigr|,
\end{multline}
where $w_0(t)=0$. The function $\hat h(\delta,t)$ can be shown to be concave in $\delta$. Since $\ccovno(t)$ solves~\eqref{eq:theo:LinNecc:01} for the weights $\{w_k(t)\}_{k=1}^K$, setting $\delta=1$ solves $\max_{\delta\in[0,1]}\hat h(\delta,t)$. Now, the optimality of $\delta=1$, the concavity of $\hat h$, and the result in~\cite[Proposition~B.3]{Bertsekas} yield
\begin{equation}\label{eq:theo:Convergence:03b}
\frac{\partial \hat h(\delta,t)}{\partial\delta}\bigg|_{\delta=0}\geq \hat h(1,t)-\hat h(0,t)\geq0,
\end{equation}
which, by~\eqref{eq:theo:Convergence:07}, completes the proof.

\section{Proof of Proposition~\ref{theo:Alg:Error}}
\label{App:Proof:Alg:Error}
Since $f$ is convex and continuously differentiable, we have that~\cite[Proposition~B.3]{Bertsekas}
\begin{multline}
    f(\opt{\boldsymbol{x}})-f\big(\hat{\boldsymbol{\rho}}\big(\opt{\boldsymbol{\beta}}\!(t),\ccovo\!(t)\big)\big)\geq\\
    \sum_{k=1}^K\big(\opt{x}_k\!-\!\hat\rho_k\big(\opt{\boldsymbol{\beta}}\!(t),\ccovo\!(t)\big)\big)\frac{\partial f(\boldsymbol{x})}{\partial x_k}\bigg|_{\boldsymbol{x}=\hat{\boldsymbol{\rho}}\left(\opt{\boldsymbol{\beta}}\!(t),\ccovo\!(t)\right)}.
\end{multline}
Hence, recalling that the weights $\{w_k(t+1)\}_{k=1}^K$ are computed as in~\eqref{eq:theo:LinNecc:00} at the rate vector $\boldsymbol{x}=\hat{\boldsymbol{\rho}}\big(\opt{\boldsymbol{\beta}}(t),\ccovo(t)\big)$, we conclude that
\begin{multline}\label{eq:Algorithm:eq01}
    f(\opt{\boldsymbol{x}})-f\big(\hat{\boldsymbol{\rho}}\big(\opt{\boldsymbol{\beta}}(t),\ccovo(t)\big)\big)\geq\\
    \sum_{k=1}^K\big(\hat\rho_k\big(\opt{\boldsymbol{\beta}}(t),\ccovo(t)\big)-\opt{x}_k\big)w_k(t+1).
\end{multline}
Since $\ccovno(t+1)$ is the composite covariance matrix that solves~\eqref{eq:theo:LinNecc:01} for the weights $\{w_k(t+1)\}_{k=1}^K$ (cf. step 4 of Algorithm~\ref{Alg:Algorithm}), and since the orderings used in $\opt{\boldsymbol{\beta}}(t)$ follow the increasing order of these weights, it follows that
\begin{equation}\label{eq:Algorithm:eq01b}
    \sum_{k=1}^Kw_k(t+1)\opt{x}_k\leq\sum_{k=1}^Kw_k(t+1)\hat\rho_k\left(\opt{\boldsymbol{\beta}}(t),\ccovno(t+1)\right).
\end{equation}
Note that the right hand of~\eqref{eq:Algorithm:eq01b} is the maximum of the objective in~\eqref{eq:theo:LinNecc:01} at iteration $t+1$. Substituting~\eqref{eq:Algorithm:eq01b} into the right hand side of~\eqref{eq:Algorithm:eq01} yields the first inequality of~\eqref{eq:Algorithm:eq02}.

\section{Proof of Theorem~\ref{theo:Convergence}}
\label{App:Proof:Convergence}
To prove this theorem, we begin by noting that the convexity of $f$ implies that $f(\opt{\boldsymbol{x}})\geq f(\boldsymbol{x})+(\opt{\boldsymbol{x}}-\boldsymbol{x})^\intercal\nabla f(\boldsymbol{x})$ for all $\boldsymbol{x}$. Since $f$ is continuously differentiable and the capacity region is bounded, then the norms of $\opt{\boldsymbol{x}}$, $\boldsymbol{x}$, and $\nabla f(\boldsymbol{x})$ are finite. This implies that $f(\opt{\boldsymbol{x}})$ is bounded below if $f(\boldsymbol{x})>-\infty$ for at least one $\boldsymbol{x}$. From~\eqref{eq:theo:Alg:f_decrease2} and~\eqref{eq:theo:Alg:f_decrease}, and the boundedness of $f$, Algorithm~\ref{Alg:Algorithm} must converge to a point. Our goal now is to show that this point is optimal.

We will use Proposition~\ref{theo:Alg:Error} to upper bound the difference between the optimum and the value of the objective at subsequent iterations. In particular, from~\eqref{eq:Algorithm:eq02} we have that
\begin{multline}\label{eq:theo:Convergence:01}
\sum_{k=1}^K\!w_k(t+1)\!\big(\hat\rho_k\!\big(\opt{\boldsymbol{\beta}}(t),\ccovno(t+1)\big)\!-\!\hat\rho_k\!\big(\opt{\boldsymbol{\beta}}(t),\ccovo(t)\big)\big)\geq\\
f\big(\hat{\boldsymbol{\rho}}\big(\opt{\boldsymbol{\beta}}(t),\ccovo(t)\big)\big)-f(\opt{\boldsymbol{x}})\geq0.
\end{multline}
We will prove the theorem by showing that the left hand side of~\eqref{eq:theo:Convergence:01} goes to zero as $t$ goes to $\infty$. Using the definition of $\hat h(\delta,t)$ in~\eqref{eq:theo:Convergence:02} in Appendix~\ref{App:Proof:Alg:Stepsize}, the left hand side of~\eqref{eq:theo:Convergence:01} can be expressed as $\hat h(1,t+1)-\hat h(0,t+1)$. From~\eqref{eq:theo:Convergence:03b}, it is straightforward to show that $\lim_{t\rightarrow\infty}\hat h(1,t)-\hat h(0,t)=0$ if $\lim_{t\rightarrow\infty}\frac{\partial \hat h(\delta,t)}{\partial\delta}\big|_{\delta=0}=0$. Using~\eqref{eq:theo:Convergence:07}, we can prove that $\lim_{t\rightarrow\infty}\frac{\partial \hat h(\delta,t)}{\partial\delta}\big|_{\delta=0}=0$ by showing that $\lim_{t\rightarrow\infty}\frac{\partial h(\delta,t)}{\partial\delta}\big|_{\delta=0}=0$.

To show that $\lim_{t\rightarrow\infty}\frac{\partial h(\delta,t)}{\partial\delta}\big|_{\delta=0}=0$, we begin by noting that
\begin{equation}\label{eq:theo:Convergence:04}
\lim_{t\rightarrow\infty}f\big(\hat{\boldsymbol{\rho}}\big(\opt{\boldsymbol{\beta}}(t-1),\ccovo(t-1)\big)\big)-f\big(\hat{\boldsymbol{\rho}}\big(\opt{\boldsymbol{\beta}}(t),\ccovo(t)\big)\big)=0.
\end{equation}
This, along with~\eqref{eq:theo:Alg:f_decrease2} and~\eqref{eq:theo:Alg:f_decrease}, implies that
\begin{equation}\label{eq:theo:Convergence:05}
\lim_{t\rightarrow\infty}f\big(\hat{\boldsymbol{\rho}}\big(\opt{\boldsymbol{\beta}}(t-1),\ccovo(t-1)\big)\big)-f\big(\hat{\boldsymbol{\rho}}\big(\opt{\boldsymbol{\beta}}(t-1),\ccovo(t)\big)\big)=0.
\end{equation}
Since $\ccovo(t)=\varepsilon(t)\ccovno(t)+(1-\varepsilon(t))\ccovo(t-1)$, cf.~\eqref{eq:Alg:cov}, then, using the definition of $h(\delta,t)$ in~\eqref{eq:Alg:h}, the equality in~\eqref{eq:theo:Convergence:05} can be expressed as
\begin{equation}\label{eq:theo:Convergence:06bb}
\lim_{t\rightarrow\infty}h(0,t)-h(\varepsilon(t),t)=0.
\end{equation}
Now, from~\eqref{eq:Alg:epsilon}, it is straightforward to show that
\begin{equation}\label{eq:theo:Convergence:06bc}
h(\varepsilon(t),t)-h(0,t)\leq\sigma\varepsilon(t)\frac{\partial h(\delta,t)}{\partial\delta}\bigg|_{\delta=0}\leq0,
\end{equation}
where the second inequality follows from~\eqref{eq:theo:Convergence:07} and~\eqref{eq:theo:Convergence:03b}. Computing the limit in~\eqref{eq:theo:Convergence:06bc} and using~\eqref{eq:theo:Convergence:06bb}, we obtain
\begin{equation}\label{eq:theo:Convergence:06bd}
\lim_{t\rightarrow\infty}\varepsilon(t)\frac{\partial h(\delta,t)}{\partial\delta}\bigg|_{\delta=0}=0.
\end{equation}
Hence, either $\lim_{t\rightarrow\infty}\frac{\partial h(\delta,t)}{\partial\delta}\big|_{\delta=0}=0$ or $\lim_{t\rightarrow\infty}\varepsilon(t)=0$. Our objective now is to show that, if $\lim_{t\rightarrow\infty}\varepsilon(t)=0$, then $\lim_{t\rightarrow\infty}\frac{\partial h(\delta,t)}{\partial\delta}\big|_{\delta=0}=0$. Towards that end, we note that, since the second order derivative of $f$ is bounded, we can find a constant $D>0$ such that, for all $\delta$ and $t$, $\frac{\partial^2h(\delta,t)}{\partial\delta^2}\leq D$. Integrating both sides of this inequality twice yields
\begin{equation}\label{eq:theo:Convergence:08}
h(\delta,t)\leq\frac{D}{2}\delta^2+\frac{\partial h(x,t)}{\partial x}\bigg|_{x=0}\delta+h(0,t).
\end{equation}
Since~\eqref{eq:theo:Convergence:08} is an upper bound on $h(\delta,t)$, we can use the Armijo stepsize rule to find a lower bound on $\varepsilon(t)$. In particular, choosing $\delta^*$ to be the largest value of $\{\chi^n\}_{n=0,1,\dots}$ satisfying
\begin{multline}\label{eq:theo:Convergence:08b}
\frac{D}{2}\delta^{*2}+\frac{\partial h(x,t)}{\partial x}\bigg|_{x=0}\delta^*+h(0,t)\leq\\
h(0,t)+\sigma\delta^*\frac{\partial h(x,t)}{\partial x}\bigg|_{x=0},
\end{multline}
%\begin{equation}\label{eq:theo:Convergence:08b}
%\frac{D}{2}\delta^{*2}+\frac{\partial h(x,t)}{\partial x}\bigg|_{x=0}\delta^*+h(0,t)\leq h(0,t)+\sigma\delta^*\frac{\partial h(x,t)}{\partial x}\bigg|_{x=0},
%\end{equation}
then, from~\eqref{eq:theo:Convergence:08}, we have that
\begin{equation}\label{eq:theo:Convergence:08bb}
h(\delta^*,t)\leq h(0,t)+\sigma\delta^*\frac{\partial h(x,t)}{\partial x}\bigg|_{x=0},
\end{equation}
which, from the definition of $\varepsilon(t)$ in~\eqref{eq:Alg:epsilon}, implies that $\varepsilon(t)\geq\delta^*$. The inequality in~\eqref{eq:theo:Convergence:08b} yields
\begin{equation}\label{eq:theo:Convergence:08c}
\delta^*\leq\frac{2(\sigma-1)}{D}\frac{\partial h(x,t)}{\partial x}\bigg|_{x=0}.
\end{equation}
Hence, if $\lim_{t\rightarrow\infty}\varepsilon(t)=0$, the facts that $\varepsilon(t)\geq\delta^*$ and that $\delta^*$ is the largest value of $\{\chi^n\}_{n=0,1,\dots}$ satisfying~\eqref{eq:theo:Convergence:08c} imply that $\lim_{t\rightarrow\infty}\frac{\partial h(\delta,t)}{\partial\delta}\big|_{\delta=0}=0$, which completes the proof of the theorem.

%\bibliographystyle{ieeetr}
%\bibliography{Daniel_Ref_Jan_2013}

\vspace*{-0.5\baselineskip}

\begin{IEEEbiographynophoto}{Daniel Calabuig} (M'11)
received the M.Sc. and Ph.D. degrees in telecommunications from the Universidad Polit\'ecnica de Valencia (UPV), Valencia, Spain, in 2005 and 2010 respectively.
In 2005 he joined the Institute of Telecommunications and Multimedia Applications (iTEAM) from the UPV. During his Ph.D. he participated in some European projects and activities like NEWCOM, COST2100 and ICARUS, working on radio resource management in heterogeneous wireless systems and Hopfield neural networks optimization. In 2009 he visited the Centre for Wireless Network Design at the University of Bedfordshire, Luton, UK. In 2010 he obtained a Marie Curie Fellowship from the European Commission to research on cooperative multipoint transmissions. Thanks to this fellowship, Daniel Calabuig visited the Department of Systems and Computer Engineering at Carleton University, Ottawa, Canada, from 2010 to 2012. During 2012, he also visited the TOBB Ekonomi ve Teknoloji \"Universitesi, Ankara, Turkey. In 2012, he returned to the iTEAM and started working inside the European project Mobile and wireless communications Enablers for the Twenty-twenty Information Society (METIS), which main objective is laying the foundation of 5G, the next generation mobile and wireless communications system. He is currently involved in the METIS-II project.
\end{IEEEbiographynophoto}

\vspace*{-0.5\baselineskip}

\begin{IEEEbiographynophoto}{Ramy H. Gohary} (S'02--M'06--SM'13)
is an adjunct research professor and a senior research associate with the Systems and Computer Engineering Department at Carleton University. He received  the B.Sc.\ (Hons.) degree  from Assiut University, Egypt in 1996, the M.Sc.\ degree from Cairo University,  Egypt, in 2000,  and the Ph.D. degree   from McMaster University, Ontario, Canada in 2006, all in electronics and communications engineering.

Dr. Gohary is the co-inventor of five US patents, and the author of more than thirty  well-cited IEEE journal papers. He is also the  referee for more than ten scientific IEEE journals, and a member of the technical program committees of seven international IEEE conferences.

Dr. Gohary is a registered Limited Engineering Licensee~(LEL) in the province of Ontario.

Dr. Gohary's research  interests include analysis and design of MIMO and cooperative wireless communication systems, applications of optimization and geometry  in signal processing and communications, information theoretic aspects of multiuser communication systems,  and applications  of iterative detection and decoding  techniques in multiple antenna  and multiuser systems.
\end{IEEEbiographynophoto}

\begin{IEEEbiographynophoto}{Halim Yanikomeroglu}
(S'96-M'98-SM'12) was born in Giresun, Turkey, in 1968. He received the B.Sc. degree in electrical and electronics engineering from the Middle East Technical University, Ankara, Turkey, in 1990, and the M.A.Sc. degree in electrical engineering (now ECE) and the Ph.D. degree in electrical and computer engineering from the University of Toronto, Canada, in 1992 and 1998, respectively.

During 1993-1994, he was with the R\&D Group of Marconi Kominikasyon A.S., Ankara, Turkey. Since 1998 he has been with the Department of Systems and Computer Engineering at Carleton University, Ottawa, Canada, where he is now a Full Professor. His research interests cover many aspects of wireless technologies with a special emphasis on wireless networks. In recent years, his research has been funded by Huawei, Telus, Blackberry, Samsung, Communications Research Centre of Canada (CRC), DragonWave, and Nortel. This collaborative research resulted in over 20 patents (granted and applied). Dr. Yanikomeroglu has been involved in the organization of the IEEE Wireless Communications and Networking Conference (WCNC) from its inception, including serving as Steering Committee Member as well as the Technical Program Chair or Co-Chair of WCNC 2004 (Atlanta), WCNC 2008 (Las Vegas), and WCNC 2014 (Istanbul). He was the General Co-Chair of the IEEE Vehicular Technology Conference Fall 2010 held in Ottawa. He has served in the editorial boards of the IEEE TRANSACTIONS ON COMMUNICATIONS, IEEE TRANSACTIONS ON WIRELESS COMMUNICATIONS, and IEEE COMMUNICATIONS SURVEYS \& TUTORIALS. He was the Chair of the IEEEs Technical Committee on Personal Communications (now called Wireless Technical Committee). He is a Distinguished Lecturer for the IEEE Communications Society and the IEEE Vehicular Technology Society.

Dr. Yanikomeroglu is a recipient of the IEEE Ottawa Section Outstanding Educator Award in 2014, Carleton University Faculty Graduate Mentoring Award in 2010, the Carleton University Graduate Students Association Excellence Award in Graduate Teaching in 2010, and the Carleton University Research Achievement Award in 2009. Dr.
Yanikomeroglu spent the 2011-2012 academic year at TOBB University of Economics and Technology, Ankara, Turkey, as a Visiting Professor. He is a registered Professional Engineer in the province of Ontario, Canada.
\end{IEEEbiographynophoto}
\end{document}